\title{Recognizing Simple-Triangle Graphs by Restricted 2-Chain Subgraph Cover}
\author{Asahi Takaoka}
\institute{
  Department of Information Systems Creation, 
  Kanagawa University, \\ 
  Rokkakubashi 3-27-1 Kanagawa-ku, 
  Kanagawa, 221--8686, Japan \\
  takaoka@jindai.jp
%  E-mail: takaoka@jindai.jp
}
\begin{document}
\maketitle

\begin{abstract}
A simple-triangle graph (also known as a PI graph) is 
the intersection graph of a family of triangles 
defined by a point on a horizontal line and 
an interval on another horizontal line. 
The recognition problem for simple-triangle graphs 
was a longstanding open problem, and 
recently a polynomial-time algorithm has been given 
[G. B. Mertzios, The Recognition of Simple-Triangle Graphs and 
of Linear-Interval Orders is Polynomial, 
SIAM J. Discrete Math., 29(3):1150--1185, 2015]. 
Along with the approach of this paper, 
we show a simpler recognition algorithm 
for simple-triangle graphs. 
To do this, we provide a polynomial-time algorithm 
to solve the following problem: 
Given a bipartite graph $G$ and a set $F$ of edges of $G$, 
find a 2-chain subgraph cover of $G$ 
such that one of two chain subgraphs has no edges in $F$. 
\keywords{Chain cover, Graph sandwich problem, 
PI graphs, Simple-triangle graphs, Threshold dimension 2 graphs} 
\end{abstract}

\section{Introduction}
Let $L_1$ and $L_2$ be two horizontal lines in the plane 
with $L_1$ above $L_2$. 
A \emph{simple-triangle graph} 
is the intersection graph of a family of triangles 
spanned by a point on $L_1$ and an interval on $L_2$. 
That is, a simple undirected graph is called a simple-triangle graph 
if there is such a triangle for each vertex 
and two vertices are adjacent if and only if 
the corresponding triangles have a nonempty intersection. 
See Figure~\ref{fig:example}\subref{fig:graph} 
and~\ref{fig:example}\subref{fig:representation} for example. 
Simple-triangle graphs are also known as 
\emph{PI graphs}~\cite{COS08-ENDM,CK87-CN}, 
where \emph{PI} stands for \emph{Point-Interval}. 
Simple-triangle graphs were introduced in~\cite{CK87-CN} 
as a generalization of 
both interval graphs and permutation graphs. 
Simple-triangle graphs are also known as 
a proper subclass of trapezoid graphs~\cite{CK87-CN,DGP88-DAM}, 
another generalization of interval graphs and permutation graphs. 

Recently, the graph isomorphism problem for trapezoid graphs
has shown to be isomorphism-complete~\cite{Takaoka15-IEICE} 
(that is, polynomial-time equivalent to the problem for general graphs). 
Since the problem can be solved in linear time 
for interval graphs~\cite{LB79-JACM} and 
for permutation graphs~\cite{Colbourn81-Networks}, 
it has become an interesting question to give 
the structural characterization of graph classes lying strictly 
between permutation graphs and trapezoid graphs or 
between interval graphs and trapezoid graphs~\cite{Uehara14-DMTCS}. 
Although a lot of research has been done for interval graphs, 
for permutation graphs, and for trapezoid graphs 
(see~\cite{Spinrad03} for example), 
there are few results for simple-triangle graphs~\cite{BLR10-Order,COS08-ENDM,CK87-CN}. 
It is only recently that a polynomial-time recognition algorithm 
have been given~\cite{Mertzios13-LNCS,Mertzios15-SIAMDM}. 

The recognition algorithm first reduces the recognition problem 
to the \emph{linear-interval cover} problem. 
The algorithm then reduces the linear-interval cover problem 
to \emph{gradually mixed} formulas, 
a tractable subclass of 3-satisfiability (3SAT). 
Finally, the algorithm solves the gradually mixed formulas 
by reducing it to 2-satisfiability (2SAT), 
which can be solved in linear time (see~\cite{APT79-IPL} for example). 
The total running time of the algorithm is $O(n^2\bar{m})$, 
where $n$ and $\bar{m}$ are the number of vertices and non-edges 
of the given graph, respectively. 

In this paper, 
we introduce the \emph{restricted 2-chain subgraph cover} problem 
as a generalization of the linear-interval cover problem. 
Then, we show that our problem is directly reducible to 2SAT. 
This result does not improve the running time, 
but it can simplify the previous algorithm 
for the recognition of simple-triangle graphs. 

\begin{figure*}[t]
  \psfrag{L1}{$L_1$}
  \psfrag{L2}{$L_2$}
  \psfrag{a1}{$a_1$}
  \psfrag{a2}{$a_2$}
  \psfrag{b1}{$b_1$}
  \psfrag{b2}{$b_2$}
  \psfrag{c1}{$c_1$}
  \psfrag{c2}{$c_2$}
  \psfrag{c3}{$c_3$}
  \psfrag{a1'}{$a_1'$}
  \psfrag{a2'}{$a_2'$}
  \psfrag{b1'}{$b_1'$}
  \psfrag{b2'}{$b_2'$}
  \psfrag{c1'}{$c_1'$}
  \psfrag{c2'}{$c_2'$}
  \psfrag{c3'}{$c_3'$}
  \centering\subcaptionbox{A graph $G$. \label{fig:graph}}{\includegraphics[scale=0.6]{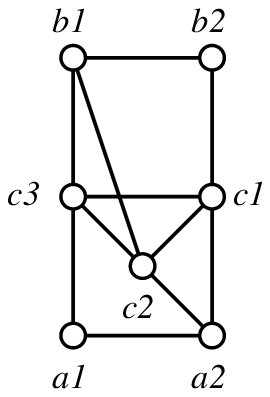}}
  \centering\subcaptionbox{A representation of $G$. \label{fig:representation}}{\includegraphics[scale=0.6]{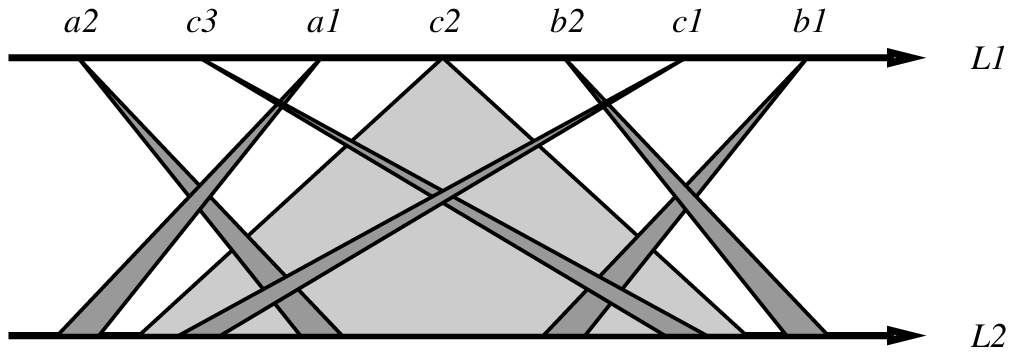}}
  \centering\subcaptionbox{The order $P$. \label{fig:order}}{\includegraphics[scale=0.6]{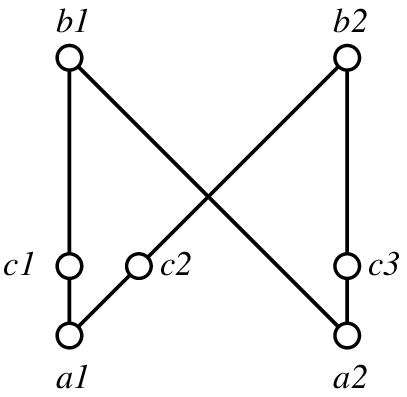}}
  \caption{
    A simple-triangle graph $G$, 
    an intersection representation of $G$, and 
    the Hasse diagram of the linear-interval order $P$ obtained from $G$. 
    }
  \label{fig:example}
\end{figure*}

\subsection{Linear-Interval Cover}
In this section, we briefly describe the linear-interval cover problem and 
the reduction to it from the recognition problem for simple-triangle graphs. 
See~\cite{Mertzios15-SIAMDM} for the details. 
We first show that the recognition of simple-triangle graphs is reducible 
to that of linear-interval orders in $O(n^2)$ time, 
where $n$ is the number of vertices of the given graph. 
A \emph{partial order} is a pair $P = (V, \prec)$, 
where $V$ is a finite set and $\prec$ is a binary relation on $V$ 
that is irreflexive and transitive. 
Partial orders are represented by \emph{transitively oriented graphs}, 
which are directed graphs such that 
if $u \rightarrow v$ and $v \rightarrow w$, then $u \rightarrow w$ 
for any three vertices $u, v, w$ of the graphs. 

There is a correspondence between partial orders and 
the intersection graphs of geometric objects spanned 
between two horizontal lines $L_1$ and $L_2$~\cite{GRU83-DM}. 
A partial order $P = (V, \prec)$ is called 
a \emph{linear-interval order}~\cite{BLR10-Order,COS08-ENDM} 
if for each element $v \in V$, there is a triangle $T_v$ 
spanned by a point on $L_1$ and an interval on $L_2$ 
such that $u \prec v$ if and only if $T_u$ lies completely to the left of $T_v$ 
for any two elements $u, v \in V$. 
See Figure~\ref{fig:example}\subref{fig:representation} 
and~\ref{fig:example}\subref{fig:order} for example. 

For a graph $G = (V, E)$, the graph $\overline{G} = (V, \overline{E})$ 
is called the \emph{complement} of $G$, 
where $uv \in \overline{E}$ if and only if  $uv \notin E$ 
for any pair of vertices $u, v \in V$. 
We can obtain a linear-interval order from a simple-triangle graph $G$ 
by giving a transitive orientation to the complement $\overline{G}$ of $G$. 
The complement $\overline{G}$ might have some different transitive orientations, 
but the following theorem states that 
any transitive orientation of $\overline{G}$ gives a linear-interval order 
if $G$ is a simple-triangle graph. 
A property of partial orders is said to be 
a \emph{comparability invariant} if either all orders obtained 
from the same graph have that property or none have that property. 
\begin{theorem}[\cite{COS08-ENDM}]
%[Cerioli, Oliveira, and Szwarcfiter~\cite{COS08-ENDM}]
Being a linear-interval order is a comparability invariant. 
\end{theorem}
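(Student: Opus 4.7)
The plan is to invoke a standard criterion in the theory of comparability graphs, going back to Gallai and formalized by Dreesen, P\"oschel, and Schmerl: a property of finite partial orders is a comparability invariant if and only if it is preserved under \emph{duality} (reversing the order) and \emph{substitution} (replacing an element by a copy of another partial order enjoying the property). Granting this criterion, the theorem reduces to checking that the class of linear-interval orders is closed under these two operations.

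For duality, given a simple-triangle representation $\{T_v : v \in V\}$ of a linear-interval order $P = (V, \prec)$, I would reflect the entire configuration in a vertical line. Each $T_v$ is mapped to a triangle of the same type (point on $L_1$, interval on $L_2$), and ``$T_u$ lies completely to the left of $T_v$'' becomes ``the image of $T_v$ lies completely to the left of the image of $T_u$''. Hence the reflected family represents the dual order $P^{*}$, which is therefore also linear-interval.

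For substitution, let $P = (V,\prec)$ be linear-interval with representation $\{T_v\}$, let $x \in V$, and let $Q = (W, \prec')$ be linear-interval with representation $\{T'_w\}$. Form the substituted order $P[x \leftarrow Q]$ in the usual way: $x$ is replaced by $Q$, and each $w \in W$ inherits the comparabilities $x$ had with $V\setminus\{x\}$. I would take a sufficiently small affine copy of the representation of $Q$ and insert it in place of $T_x$, choosing the scaling so that every inserted apex on $L_1$ and every inserted interval on $L_2$ lies in a tiny neighborhood of the apex and base of $T_x$ whose radius is smaller than the horizontal gap separating $T_x$ from every $T_v$ that is comparable to $x$. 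Internal comparisons within the insertion then realize $Q$, while every comparison between an inserted triangle and a $T_v$ with $v \neq x$ has the same sign as the comparison of $T_v$ with the original $T_x$.

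The main obstacle is the substitution step: since the apex of $T_x$ is a single point on $L_1$, the inserted apexes must simultaneously be spread out to witness the internal order of $Q$ and kept clustered tightly enough around $p_x$ that their ``left-of'' relations with the outer apexes are inherited from $T_x$. The perturbation argument therefore needs a clean genericity assumption on the original representation, namely that no $p_v$ for $v\neq x$ equals $p_x$ and that the endpoints of $T_v$'s interval are not equal to those of $T_x$'s interval; a standard preprocessing of the representation, shifting coordinates by arbitrarily small amounts without changing any strict inequality, secures this, and the remaining verification is then routine.
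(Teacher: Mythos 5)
This theorem is imported from~\cite{COS08-ENDM}; the paper contains no proof of it, so your proposal can only be judged on its own terms, and on those terms it has a genuine gap in the substitution step. First, a smaller issue: the Gallai/Dreesen--P\"oschel--Schmerl criterion is that a property is a comparability invariant if and only if it is preserved under reversal of a single \emph{autonomous set} (module); ``closed under duality and substitution'' is not equivalent to this. With heredity (which you use tacitly and which does hold for a representation-defined class), duality-closure plus substitution-closure is \emph{sufficient}, since reversing an autonomous set $A$ amounts to writing $P = P'[x \leftarrow P_A]$ and substituting $P_A^{*}$ back in; but it is not necessary --- interval orders are a comparability invariant yet are not substitution-closed (substitute a $2$-chain for the isolated point of $\mathbf{1}+\mathbf{2}$ to create a $\mathbf{2}+\mathbf{2}$). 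So you have chosen to prove something strictly stronger than what the criterion demands, and it is not clear the stronger statement is even true for linear-interval orders.

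The fatal problem is the shrink-and-insert construction itself. A triangle representation encodes $u \prec v$ as ``$p_u < p_v$ \emph{and} the base interval of $u$ lies entirely left of that of $v$,'' so an incomparability $x \parallel v$ with $p_v > p_x$ forces every inserted interval $I'_w$ to reach rightward at least to the left endpoint of $I_v$, while an incomparability $x \parallel v'$ with $p_{v'} < p_x$ forces every $I'_w$ to reach leftward at least to the right endpoint of $I_{v'}$. Already for $P = \mathbf{1}+\mathbf{2}$ with $x$ the isolated element and $v' \prec v$, the natural representation ($p_{v'} < p_x < p_v$, $I_{v'}=[0,1]$, $I_v=[2,3]$, $I_x=[0.5,2.5]$) pins every inserted interval to contain $[1,2]$; the inserted intervals then pairwise overlap and cannot realize even a $2$-chain $Q$, no matter how you place the inserted apexes near $p_x$. (A representation of $\mathbf{2}+\mathbf{2}$ does exist, but only by moving both inserted apexes entirely to the right of $p_v$ and re-choosing the intervals globally --- precisely what your ``tiny neighborhood of the apex of $T_x$'' prescription forbids.) Your genericity/perturbation remark addresses ties between coordinates, which is not the obstruction; the obstruction is that distinct outer elements incomparable to $x$ from opposite sides impose incompatible covering constraints on the inserted bases. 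The standard repair --- and, to my knowledge, the route taken in~\cite{COS08-ENDM} --- is to verify closure under reversal of one autonomous set $A$, where $A$ already carries a sub-representation embedded in a global representation of $P$ and one only has to reflect that sub-representation in place, rather than to establish closure under arbitrary substitution.
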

Many algorithms have been proposed for transitive orientation, 
including a linear-time one~\cite{MS99-DM}. 
Since the complement of a graph can be obtained in $O(n^2)$ time, 
the recognition of simple-triangle graphs is reducible 
to that of linear-interval orders in $O(n^2)$ time. 

We then show that the recognition of linear-interval orders is reducible 
to the linear-interval cover problem in $O(n^2)$ time, 
where $n$ is the number of elements of the given partial orders. 
Let $P = (V, \prec)$ be a partial order 
with $V = \{ v_1, v_2, \ldots, v_n \}$, 
and let $V' = \{ v_1', v_2', \ldots, v_n' \}$. 
The \emph{domination bipartite graph} $C(P) = (V, V', E)$ of $P$ is defined 
such that $v_iv_j' \in E$  if and only if $v_i \prec v_j$ in $P$~\cite{MS94-JAL}. 
We also define that $E_0 = \{ v_iv_i' \mid v_i \in V \}$. 
The \emph{bipartite complement} of $C(P)$ is 
the bipartite graph $\widehat{C(P)} = (V, V', \hat{E})$, 
where $\hat{E}$ is the set of non-edges between the vertices of $V$ and $V'$, 
that is, $v_iv_j' \in \hat{E}$ if and only if  $v_iv_j' \notin E$ 
for any vertices $v_i \in V$ and $v_j' \in V'$. 
By definition, we have $E_0 \subseteq \hat{E}$. 

Let $2K_2$ denote a graph consisting of 
four vertices $u_1, u_2, v_1, v_2$ with two edges $u_1v_1, u_2v_2$. 
A bipartite graph $G = (U, V, E)$ 
is called a \emph{chain graph}~\cite{Yannakakis82-SIAM} 
if it has no $2K_2$ as an induced subgraph. 
Equivalently, a bipartite graph $G$ is a chain graph if and only if 
there is a linear ordering $u_1, u_2, \ldots, u_n$ on $U$ (or $V$) 
such that $N_G(u_1) \subseteq N_G(u_2) \subseteq \ldots \subseteq N_G(u_n)$, 
where $N_G(u)$ is the set of vertices adjacent to $u$ in $G$. 
A \emph{chain subgraph} of $G$ is a subgraph of $G$ that has no induced $2K_2$. 
A bipartite graph $G = (U, V, E)$ is said to be \emph{covered} by 
two chain subgraphs $G_1 = (U, V, E_1)$ and $G_2 = (U, V, E_2)$ 
if $E = E_1 \cup E_2$ 
(we note that in general, $E_1$ and $E_2$ are not disjoint), and 
the pair of chain subgraphs $(G_1, G_2)$ is called 
a \emph{2-chain subgraph cover} of $G$. 
For a partial order $P$, 
a 2-chain subgraph cover $(G_1, G_2)$ of $\widehat{C(P)}$ is called 
a \emph{linear-interval cover} if $G_1$ has no edges in $E_0$. 
\begin{theorem}[\cite{Mertzios15-SIAMDM}]
%[Mertzios~\cite{Mertzios15-SIAMDM}]
A partial order $P$ is linear-interval order if and only if 
$\widehat{C(P)}$ has a linear-interval cover. 
\end{theorem}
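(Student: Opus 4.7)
The plan is to prove both directions by explicit construction, with the forward direction being essentially routine and the backward direction turning on the extraction of a linear extension from the cover.

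For the ($\Rightarrow$) direction, given a triangle representation $T_i = (p_i, I_i)$ of $P$ with $I_i = [\ell_i, r_i]$ and distinct points, I would set $E(G_1) = \{v_iv_j' : p_i > p_j\}$ and $E(G_2) = \{v_iv_j' : r_i \geq \ell_j\}$. The cover identity $E(G_1) \cup E(G_2) = \hat{E}$ is immediate, since $v_i \not\prec v_j$ iff $p_i \not< p_j$ or $I_i$ is not completely left of $I_j$, and the self-edges $v_iv_i'$ are absorbed by $G_2$ alone. Sorting $V$ by increasing $p_i$ (respectively $r_i$) makes $N_{G_1}$ (respectively $N_{G_2}$) nondecreasing along the order, and $p_i \not> p_i$ ensures $G_1$ has no $E_0$-edges.

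For the ($\Leftarrow$) direction, given a linear-interval cover $(G_1, G_2)$, I would recover intervals from $G_2$ via the standard equivalence between chain graphs and difference graphs: there exist $r_i$ for $v_i \in V$ and $\ell_j$ for $v_j' \in V'$ with $v_iv_j' \in E(G_2) \iff r_i \geq \ell_j$. Setting $I_i = [\ell_i, r_i]$ yields a valid interval because every $E_0$-edge lies in $G_2$ (having been excluded from $G_1$), forcing $r_i \geq \ell_i$. For the points, I would define a relation $\prec_L$ on $V$ by $v_j \prec_L v_i$ iff $v_iv_j' \in E(G_1)$. Irreflexivity is exactly the $E_0$-freeness of $G_1$, and both asymmetry and transitivity follow from the absence of an induced $2K_2$ together with $E_0$-freeness, since every attempted violation produces a $2K_2$ on the four vertices involved.

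The main obstacle is showing that $\prec \cup \prec_L$ is acyclic; once this is established, Szpilrajn's extension theorem supplies a linear order $<_p$ extending both, and $p_i$ may be taken to be the $<_p$-rank of $v_i$. Assuming a cycle exists, I would use transitivity of each relation to contract consecutive same-type edges, reducing to an alternating cycle $v_{a_1} \prec_L v_{b_1} \prec v_{a_2} \prec_L \cdots \prec v_{a_m} \prec_L v_{b_m} \prec v_{a_1}$. Each $\prec_L$-edge gives $v_{b_t}v_{a_t}' \in E(G_1)$, while each $\prec$-edge $v_{b_t} \prec v_{a_{t+1}}$ forces $v_{b_t}v_{a_{t+1}}' \notin \hat{E}$ and hence $\notin E(G_1)$. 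Because $V$-side neighborhoods in a chain graph are totally ordered by inclusion, comparing $v_{b_t}$ with $v_{b_{t+1}}$ through the column $v_{a_{t+1}}'$ (which belongs to the neighborhood of the latter but not of the former) yields $N_{G_1}(v_{b_t}) \subsetneq N_{G_1}(v_{b_{t+1}})$; chaining around the cycle with indices mod $m$ produces $N_{G_1}(v_{b_1}) \subsetneq \cdots \subsetneq N_{G_1}(v_{b_m}) \subsetneq N_{G_1}(v_{b_1})$, a contradiction.

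Finally, the triangles $(p_i, I_i)$ represent $P$: if $v_i \prec v_j$ then $v_iv_j' \notin \hat{E} \supseteq E(G_1) \cup E(G_2)$, giving $r_i < \ell_j$ from $G_2$ and $p_i < p_j$ from the linear extension; conversely, $p_i < p_j$ rules out $v_iv_j' \in E(G_1)$ (which would force $p_j < p_i$) and $r_i < \ell_j$ rules out $v_iv_j' \in E(G_2)$, so $v_iv_j' \notin \hat{E}$ and hence $v_i \prec v_j$.
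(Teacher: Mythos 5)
Your proposal is correct. Note that the paper itself does not prove this theorem --- it is quoted from Mertzios \cite{Mertzios15-SIAMDM} as a black box --- so there is no in-paper argument to compare against; what you have written is a complete, self-contained proof along the natural lines: the two chain subgraphs of $\widehat{C(P)}$ correspond to the two factors of a linear-interval order (a linear order coming from $G_1$, which is forced to be irreflexive precisely by the $E_0$-freeness condition, and an interval order coming from $G_2$ via the difference-graph representation of chain graphs, with $E_0\subseteq E(G_2)$ guaranteeing $\ell_i\le r_i$). All the key steps check out: the induced-$2K_2$ arguments for asymmetry and transitivity of $\prec_L$ correctly use the $E_0$-non-edges $v_iv_i'$, $v_jv_j'$ as the missing diagonals; and the acyclicity argument is sound, since each $\prec$-step removes a column from $N_{G_1}(v_{b_t})$ that the next $\prec_L$-step restores, so the total ordering of $V$-side neighborhoods in the chain graph $G_1$ yields the impossible strict inclusion cycle. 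Two points you leave implicit but should state for completeness: a cycle lying entirely inside one of the two relations is already excluded by irreflexivity and transitivity of that relation (your contraction step presupposes both types occur), and the equivalence ``$T_u$ lies completely to the left of $T_v$ iff $p_u<p_v$ and $r_u<\ell_v$'' deserves one sentence, since it is the bridge between the geometric definition and the two inequalities your construction controls. Neither is a gap.
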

The \emph{linear-interval cover} problem asks 
whether $\widehat{C(P)}$ has a linear-interval cover. 
Since $C(P)$ and $\widehat{C(P)}$ can be obtained in $O(n^2)$ time 
from a partial order $P$, 
the recognition of linear-interval orders is reducible 
to the linear-interval cover problem in $O(n^2)$ time.

\subsection{Restricted 2-Chain Subgraph Cover}
As a generalization of the linear-interval cover problem, 
we consider the following restricted problem 
for 2-chain subgraph cover. 
\medskip
\begin{center}
\fbox{\parbox{0.95\linewidth}{\noindent
{\sc Restricted 2-Chain Subgraph Cover}\\[.8ex]
\begin{tabular*}{.95\textwidth}{rl}
\emph{Instance:} & A bipartite graph $G = (U, V, E)$ and 
a set $F$ of edges of $G$.\\
\emph{Question:} & Find a 2-chain subgraph cover $(G_1, G_2)$ of $G$ \\
& such that $G_1$ has no edges in $F$. 
\end{tabular*}
}}
\end{center}
\medskip
Notice that $G_2$ has all the edges in $F$. 
Let $\hat{E}$ be the set of edges of the bipartite complement $\hat{G}$ of $G$ . 
Let $m = |E|$, $\hat{m} = |\hat{E}|$, and $f = |F|$. 
The following is our main result. 
\begin{theorem}
The restricted 2-chain subgraph cover problem can be solved 
in $O(m\hat{m} + \min\{m^2, \hat{m}(\hat{m}+f)\})$ time. 
\end{theorem}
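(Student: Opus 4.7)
The plan is to reduce the restricted 2-chain subgraph cover problem to 2-satisfiability (2SAT), for which a linear-time algorithm is known~\cite{APT79-IPL}; the running time in the theorem will follow from a careful count of the variables and clauses of the resulting 2SAT instance. For each edge $e \in E$ we introduce a boolean variable $x_e$ whose intended meaning is $x_e = 1$ iff $e \in E_1$ (and hence $x_e = 0$ iff $e \in E_2$, after a short argument that the cover may be taken to be disjoint). The restriction $E_1 \cap F = \emptyset$ is captured by the unit clauses $\neg x_e$ for $e \in F$. The core of the reduction is to express the condition that $E_1$ and $E_2$ are chain subgraphs, equivalently, are free of induced $2K_2$, by a conjunction of 2-clauses.

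Two kinds of 2-clauses will appear. Each induced $2K_2$ of $G$ itself forces the two matching edges into different chain subgraphs: for every quadruple $(u_1,u_2,v_1,v_2)$ with $u_1v_1, u_2v_2 \in E$ and $u_1v_2, u_2v_1 \in \hat{E}$ we add $(x_{u_1v_1} \vee x_{u_2v_2})$ and $(\neg x_{u_1v_1} \vee \neg x_{u_2v_2})$. In addition, from the perspective of $E_1$ the edges of $F$ behave like non-edges of $\hat{E}$, since they cannot appear in $E_1$; so we add a 2-clause for every ``effective'' $2K_2$ in which a completing position is taken from $\hat{E} \cup F$, together with ``mixed'' binary implications linking an edge of $E$ to an element of $\hat{E} \cup F$ via a third position that would otherwise complete a $2K_2$ in a chain subgraph.

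The complexity follows by counting the clauses in two ways: the mixed implications contribute $O(m\hat{m})$ clauses (the first term of the bound), while the remaining $2K_2$-style clauses can be indexed either by pairs of edges of $G$, giving $O(m^2)$, or by pairs of elements of $\hat{E} \cup F$, giving $O(\hat{m}(\hat{m}+f))$, so taking the minimum yields the second term. Linear-time 2SAT then concludes, and a satisfying assignment is translated back into a cover by reading off the sets $\{e : x_e=1\}$ and $\{e : x_e=0\}$.

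The main obstacle, and the bulk of the proof, is correctness: we must show that this set of 2-clauses is equivalent to the chain condition on $E_1$ and $E_2$. Naively, an induced $2K_2$ inside $E_i$ whose completing positions are edges of $E \setminus F$ would require a 3- or 4-clause, so we must argue that such a clause is either vacuous or already implied by the binary ones after the unit clauses on $F$ have been propagated. We expect this to hinge on a careful case analysis over 4-tuples of vertices exploiting the structural interaction between $E$, $\hat{E}$, and $F$, and ultimately on the observation that any satisfying assignment can be patched, if necessary, into a legitimate (possibly overlapping) cover without introducing new induced $2K_2$'s.
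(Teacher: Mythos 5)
Your high-level strategy (reduce to 2SAT over edge variables, with unit clauses for $F$ and conflict clauses for induced $2K_2$'s) matches the paper's Step~1, and you have correctly located the crux: an induced $2K_2$ inside $G_1$ or $G_2$ whose two completing positions are edges of $G$ cannot be forbidden by a 2-clause. But your proposal does not resolve that crux --- it only asserts that such constraints are ``either vacuous or already implied'' and that a satisfying assignment ``can be patched,'' and that expectation is false as stated: the 2SAT the paper builds (and essentially the one you describe) only guarantees what the paper calls an $(A,C)$-free bipartition, and such a bipartition can still contain configurations $(B_1)$ and $(B_2)$, i.e.\ exactly the half-completed $2K_2$'s you are worried about. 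Eliminating them is not a consequence of unit propagation on $F$; it requires the dedicated swapping procedure of Section~\ref{sect:swap} (Lemma~\ref{lemma:swap}, an extended case analysis showing the swaps create no new forbidden configurations, and the source of the $O(m\hat{m})$ term, which you instead attribute to ``mixed clauses'' in the 2SAT), and even after that one must show (Lemmas~\ref{lemma:G2}--\ref{lemma:G1-2}) that the red edges admit a chain completion in $G-F$ by ruling out alternating cycles of arbitrary length, not just $2K_2$'s. None of this is routine, and it is the actual content of the theorem.

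Two further concrete problems. First, you introduce a variable for \emph{every} edge and ask for a disjoint cover; the paper deliberately restricts the variables to the committed edges $E_c$, because uncommitted edges must remain available to both chain subgraphs (in the final construction $G_2$ contains all of $E_u$ while the chain completion of $E_r$ in $G-F$ may also need some of them). Forcing a disjoint assignment of uncommitted edges can make your 2SAT-plus-patching scheme fail on instances that do admit a cover, and your parenthetical ``short argument that the cover may be taken to be disjoint'' is unsupported --- deleting an edge from a chain subgraph can create an induced $2K_2$. Second, your clause count is reverse-engineered to match the stated bound rather than derived from a concrete clause set: you never specify the ``mixed'' implications precisely enough to count them, and the configuration-$(C)$ clauses (a path of length~3 with middle edge in $F$), which are essential for the restriction $G_1\cap F=\emptyset$ to interact correctly with the chain condition, do not appear in your construction. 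As it stands the proposal is a plausible plan whose hard steps are exactly the ones left open.
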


\begin{figure*}[t]
  \psfrag{u1}{$u_1$}
  \psfrag{u2}{$u_2$}
  \psfrag{v1}{$v_1$}
  \psfrag{v2}{$v_2$}
  \centering
  \begin{tabular}{ccccc}
    \includegraphics[scale=.42]{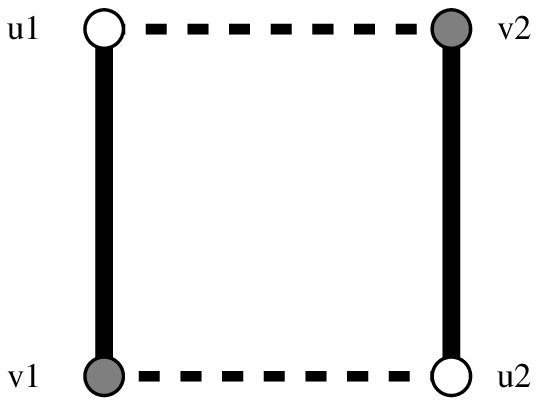}&
    \includegraphics[scale=.42]{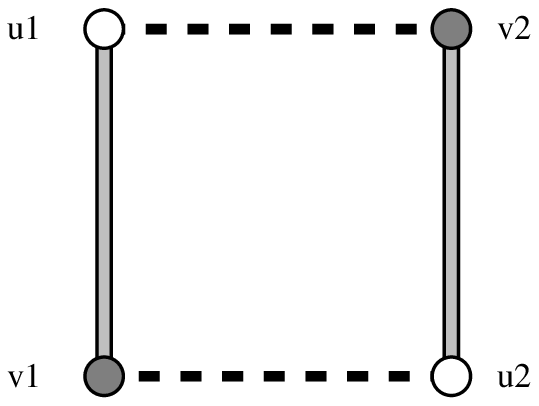}&
    \includegraphics[scale=.42]{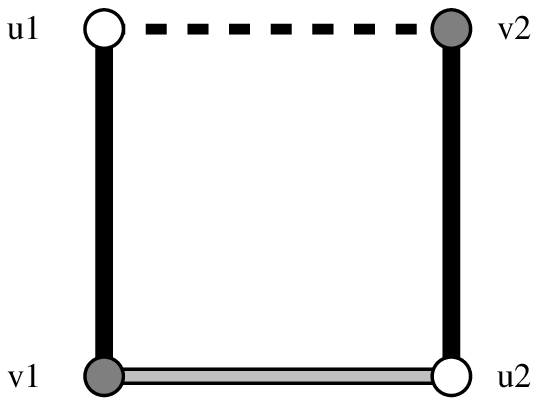}&
    \includegraphics[scale=.42]{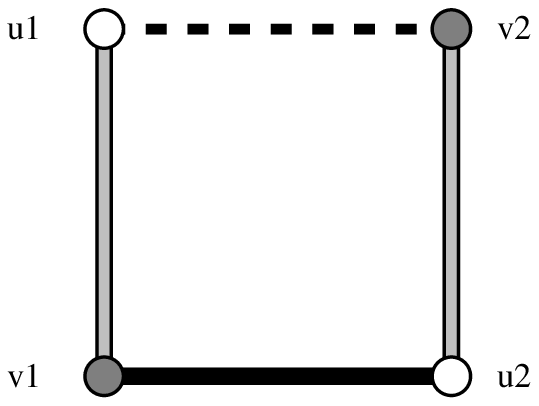}&
    \includegraphics[scale=.42]{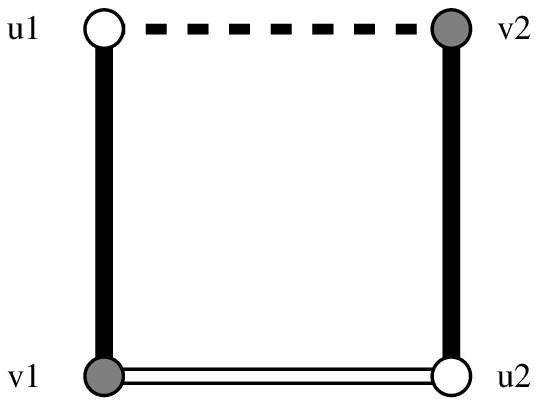}
    \\
    $(A_1)$ & $(A_2)$ & $(B_1)$ & $(B_2)$ & $(C)$ 
  \end{tabular}
  \caption{
    Forbidden configurations. 
    Solid lines and gray solid lines denote edges in $E_r$ and $E_b$, 
    respectively. 
    Dashed lines denote non-edges in $\hat{E}$, and 
    double lines denote edges in $F$. } 
  \label{fig:A-C}
\end{figure*}

In the rest of this section, we describe the outline of our algorithm. 
The details are shown in Section~\ref{sect:algorithm}. 
Two edges $e$ and $e'$ of a bipartite graph $G = (U, V, E)$ is said to 
be \emph{in conflict in $G$} 
if the vertices of $e$ and $e'$ induce a $2K_2$ in $G$. 
An edge $e \in E$ is said to be \emph{committed} 
if there is another edge $e' \in E$ such that 
$e$ and $e'$ are in conflict in $G$, 
and said to be \emph{uncommitted} otherwise. 
Let $E_c$ be the set of committed edges of $G$, and 
let $E_u$ be the set of uncommitted edges of $G$. 

Suppose $G$ has a 2-chain subgraph cover $(G_1, G_2)$ 
such that $G_1$ has no edges in $F$. 
If two edges $e, e' \in E$ are in conflict in $G$, 
then $e$ and $e'$ may not belong to the same chain subgraph. 
Therefore, each committed edge in $E_c$ belongs to either $G_1$ or $G_2$. 
We refer to the committed edges of $G_1$ as \emph{red} edges 
and the committed edges of $G_2$ as \emph{blue} edges. 
Let $E_r$ and $E_b$ be the set of red edges and blue edges, respectively, 
and we call $(E_r, E_b)$ the \emph{bipartition} of $E_c$. 
Notice that $F \subseteq E_b \cup E_u$ since $E_r$ has no edges in $F$. 
Hence, we assume without explicitly stating it in the rest of this paper that 
all the committed edges in $F$ are in $E_b$. 
We can also see that 
the bipartition $(E_r, E_b)$ does not have the following 
\emph{forbidden configurations} (see Figure~\ref{fig:A-C}). 
\begin{itemize}
\item 
Configuration $(A_1)$ [resp., $(A_2)$] consists of four vertices $u_1, u_2 \in U$ and 
$v_1, v_2 \in V$ with edges $u_1v_1, u_2v_2 \in E_r$ [resp., $u_1v_1, u_2v_2 \in E_b$] 
and non-edges $u_1v_2, u_2v_1 \in \hat{E}$, that is, 
$u_1v_1$ and $u_2v_2$ are in conflict in $G$; 
\item 
Configuration $(B_1)$ [resp., $(B_2)$] consists of four vertices $u_1, u_2 \in U$ and 
$v_1, v_2 \in V$ with edges $u_1v_1, u_2v_2 \in E_r$ [resp., $u_1v_1, u_2v_2 \in E_b$], 
a non-edge $u_1v_2 \in \hat{E}$, and 
an edge $u_2v_1 \in E_b$ [resp., $u_2v_1 \in E_r$]; 
\item 
Configuration $(C)$ consists of four vertices $u_1, u_2 \in U$ and 
$v_1, v_2 \in V$ with edges $u_1v_1, u_2v_2 \in E_r$, 
a non-edge $u_1v_2 \in \hat{E}$, and an edge $u_2v_1 \in F$. 
\end{itemize}

Our algorithm construct a bipartition $(E_r, E_b)$ of $E_c$ 
that does not have some forbidden configurations. 
A bipartition of $E_c$ is called \emph{$(A, C)$-free} 
if it has neither configuration $(A_1)$, $(A_2)$, nor $(C)$. 
A bipartition of $E_c$ is called \emph{$(A, B, C)$-free} 
if it has neither configuration $(A_1)$, $(A_2)$, $(B_1)$, $(B_2)$, nor $(C)$. 

\begin{theorem}
A bipartite graph $G$ has a 2-chain subgraph cover $(G_1, G_2)$ such that 
$G_1$ has no edges in $F$ if and only if 
$E_c$ has an $(A, C)$-free bipartition. 
\end{theorem}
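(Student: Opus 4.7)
Given a cover $(G_1, G_2)$ with $E_1 \cap F = \emptyset$, I would set $E_r := E_c \cap E_1$ and $E_b := E_c \cap E_2$. Since two edges in conflict in $G$ cannot share a chain subgraph, each committed edge lies in exactly one of $G_1, G_2$, so $(E_r, E_b)$ is a bipartition of $E_c$. Each forbidden configuration is then ruled out by the same pattern: for $(A_1)$ and $(A_2)$, the two same-coloured edges would sit in a common $G_i$ with both cross edges in $\hat{E}$, yielding an induced $2K_2$; for $(C)$, the two red edges would sit in $G_1$ with one cross edge in $\hat{E}$ and the other in $F$, both excluded from $E_1$, again yielding an induced $2K_2$.

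\textbf{Reverse direction ($\Leftarrow$).} Starting from an $(A, C)$-free bipartition $(E_r, E_b)$, the plan is to set
\[
E_1 := E_r \cup (E_u \setminus F), \qquad E_2 := E_b \cup E_u,
\]
so that $E_1 \cup E_2 = E$ and $E_1 \cap F = \emptyset$ are immediate. To verify that $G_1, G_2$ are chain subgraphs, I would consider a candidate induced $2K_2$ on edges $e = u_1v_1,\; e' = u_2v_2 \in E_i$ and case-split on the locations of the cross edges $u_1v_2, u_2v_1$ among $\hat{E},\, E_r,\, E_b,\, E_u \setminus F,\, E_u \cap F$. Pairs involving any uncommitted edge are dispatched by the defining property that uncommitted edges belong to no $2K_2$ of $G$; pairs of committed edges with both cross edges in $\hat{E}$ match $(A_1)$ or $(A_2)$; and one cross in $\hat{E}$ with the other in $F$ matches $(C)$.

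\textbf{Main obstacle.} The leftover cases---chiefly the $(B_1)/(B_2)$ configurations, and patterns in which a committed $K_{2,2}$ is split red/blue along its diagonals---are allowed by $(A, C)$-freeness yet still break the chain property of $G_i$ under the naive assignment. My plan for overcoming this is to first refine $(E_r, E_b)$ into a bipartition also free of these offending patterns. For a $(B_1)$-instance with $u_1v_1, u_2v_2 \in E_r$, $u_1v_2 \in \hat{E}$, $u_2v_1 \in E_b$, the committedness of $u_2v_1$ together with $(A_2)$-freeness supplies a partner edge $e^* \in E_r$ in conflict with $u_2v_1$ in $G$; the constraints forced on $e^*$ by $(A_1)$ against $u_1v_1, u_2v_2$ and by $(C)$ relative to $F$ then permit a swap of the colours of $u_2v_1$ and $e^*$ which eliminates the configuration. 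The hardest technical point, which I expect to be the proof's core, is controlling the resulting cascade of swaps: each swap may create new $(B)$-configurations elsewhere, so a monovariant (such as the number of surviving offending patterns under a suitable order) is needed, and one must check that every swap preserves $(A, C)$-freeness. Once a cleanly refined bipartition is produced, the case analysis of the preceding paragraph shows that $G_1, G_2$ are chain subgraphs as required.
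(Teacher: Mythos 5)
Your forward direction and your overall plan for the converse (first refine the bipartition to kill the $(B)$-configurations, then extend to a cover) match the paper's, but the final extension step fails as written. Taking $E_1 := E_r \cup (E_u \setminus F)$ does not give a chain subgraph: if $u_1v_1, u_2v_2 \in E_r$ and both cross pairs $u_1v_2, u_2v_1$ lie in $E_b$, then no forbidden configuration excludes this (even $(B_1)$ and $(B_2)$ require one cross pair to be a non-edge), yet the four vertices induce a $2K_2$ in $(U,V,E_1)$ because neither cross edge is in $E_1$. The paper instead takes $G_1$ to be a \emph{chain completion} of $E_r$ in $G - F$, i.e.\ a chain subgraph of $G-F$ containing $E_r$ that may also absorb committed non-$F$ edges; by Lemma~\ref{lemma:sandwich} such a completion exists iff there is no alternating cycle of $E_r$ relative to $G-F$ of \emph{any} even length, not just length $4$, and establishing that (Lemma~\ref{lemma:G1}) is a substantial argument your proposal does not anticipate. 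Likewise, for $G_2$ with edge set $E_b \cup E_u$, dispatching "pairs involving an uncommitted edge" by the definition of uncommittedness is too quick: an uncommitted edge lies in no $2K_2$ of $G$, but it can lie in a $2K_2$ of the subgraph $G_2$ when a cross pair is an edge of $E_r$ (present in $G$ but absent from $G_2$); this is exactly Cases~2 and~3 of Lemma~\ref{lemma:G2}, which need a genuine chain of deductions from committedness and $(A_2)$-, $(B_2)$-freeness.

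On the swap step you correctly identify the cascade as the core difficulty but leave it unresolved. The paper's resolution is to process one non-edge $uv \in \hat{E}$ at a time and swap \emph{in a single batch} the set $H$ of all red edges of $(B_2)$-configurations and all blue edges of $(B_1)$-configurations having $uv$ as their non-edge; Lemma~\ref{lemma:swap} then shows that in the resulting bipartition no edge of $H$ belongs to any forbidden configuration at all. Consequently every $(B)$-configuration on $uv$ is destroyed and no new forbidden configuration is created anywhere (a new one would have to contain a swapped edge), so a single pass over the non-edges terminates with an $(A,B,C)$-free bipartition. Your edge-by-edge swap with an unspecified monovariant does not yet establish termination or preservation of $(A,C)$-freeness, so as it stands the proposal has genuine gaps in both halves of the converse.
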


The outline of our algorithm is as follows. 
\begin{description}
\item[Step 1:] 
Partition the set $E_c$ of committed edges into 
an $(A, C)$-free bipartition $(E_r, E_b)$ by solving 2SAT. 
\item[Step 2:] 
From the $(A, C)$-free bipartition $(E_r, E_b)$ of $E_c$, 
compute an $(A, B, C)$-free bipartition $(E_r', E_b')$ of $E_c$
by swapping some edges between $E_r$ and $E_b$. 
\item[Step 3:] 
From the $(A, B, C)$-free bipartition $(E_r', E_b')$ of $E_c$, 
compute a desired 2-chain subgraph cover of $G$ 
by adding some uncommitted edges into $E_r'$ and $E_b'$. 
\end{description}
We will show in Sections~\ref{sect:bipartition},~\ref{sect:swap}, and~\ref{sect:add} 
that \textbf{Step 1}, \textbf{Step 2}, and \textbf{Step 3} can be done 
in $O(\min\{m^2, \hat{m}(\hat{m}+f)\})$ time, $O(m\hat{m})$ time, and 
linear time, respectively.

\subsection{Related Work}
A bipartite graph $G = (U, V, E)$ is said to be \emph{covered} by $k$ subgraphs 
$G_i = (U, V, E_i)$, $1 \leq i \leq k$, 
if $E = E_1 \cup E_2 \cup \cdots \cup E_k$. 
A \emph{$k$-chain subgraph cover} problem asks whether 
a given bipartite graph can be covered by $k$ chain subgraphs. 
The $k$-chain subgraph cover problem is NP-complete if $k \geq 3$, 
while it is polynomial-time solvable 
if $k \leq 2$~\cite{Yannakakis82-SIAM}. 

The 2-chain subgraph cover problem is closely related to 
some recognition problems; more precisely, 
they can be efficiently reduced to the 2-chain subgraph cover problem. 
They are the recognition problems for 
threshold dimension~2 graphs on split graphs~\cite{IP81-ADM,RS95-STOC}, 
circular-arc graphs with clique cover number~2~\cite{HH97-GaC,Spinrad88-JCTSB}, 
2-directional orthogonal ray graphs~\cite{STU10-DAM,TTU14-IEICE}, and 
trapezoid graphs~\cite{MS94-JAL}. 
Other related problems and surveys can be found in 
Chapter~8 of~\cite{MP95-book} and Section 13.5 of~\cite{Spinrad03}. 

As far as we know, there are two approaches 
for the 2-chain subgraph cover problem and 
the other related problems. 
One approach is shown in~\cite{MS94-JAL,Spinrad88-JCTSB}, 
which reduces the 2-chain subgraph cover problem 
to the recognition of 2-dimensional partial orders. 
This approach is used in the fastest known algorithm~\cite{MS94-JAL} 
with a running time of $O(n^2)$, 
where $n$ is the number of vertices of the given graph. 
Another approach can be found in~\cite{HH97-GaC,IP81-ADM,RS95-STOC}. 
They show that 
a bipartite graph $G = (U, V, E)$ has a 2-chain subgraph cover if and only if 
the \emph{conflict graph} $G^* = (V^*, E^*)$ of $G$ is bipartite, 
where $V^* = E$ and two edges $e$ and $e'$ in $E$ are adjacent in $G^*$ 
if $e$ and $e'$ are in conflict in $G$. 
We note that the algorithm in this paper is based on the latter approach. 

In Section 8.6 of~\cite{MP95-book}, the following problem is considered 
for recognizing threshold dimension~2 graphs: 
Given a bipartite graph $G$ and a pair $(F_1, F_2)$ of edge sets, 
find a 2-chain subgraph cover $(G_1, G_2)$ of $G$ 
such that $G_1$ and $G_2$ have every edge in $F_1$ and $F_2$, respectively. 
We call such a problem the \emph{extension problem} for 2-chain subgraph cover. 
We emphasize that the extension problem is not a generalization of 
our restricted 2-chain subgraph cover problem 
since in the extension problem, 
$G_1$ and $G_2$ are allowed to have all the uncommitted edges of $G$. 
As shown in~\cite{MP95-book}, 
this problem can be solved in polynomial time by reducing it 
to some variation of the recognition problem 
for 2-dimensional partial orders. 
We note that this variation can be stated as 
the problem of extending a partial orientation of a permutation graph to 
a 2-dimensional partial order~\cite{KKKW12-LNCS}.

\section{Algorithm}\label{sect:algorithm}

\subsection{Partitioning Edges}\label{sect:bipartition}
A \emph{2CNF formula} is a Boolean formula in conjunctive normal form 
with at most two literals per clause. In this section, 
we construct a 2CNF formula $\phi$ 
such that $\phi$ is satisfiable if and only if 
$G$ has an $(A, C)$-free bipartition of $E_c$. 
The construction of $\phi$ is as follows: 
\begin{itemize}
\item 
Assign the Boolean variable $x_e$ to each committed edge $e \in E_c$; 
\item 
Add the clause $(x_e)$ for each edge $e \in F \cap E_c$; 
\item 
For each pair of two edges $e$ and $e'$ in $E_c$, 
add the clauses $(x_e \vee x_{e'})$ and 
$(\overline{x_e} \vee \overline{x_{e'}})$ to $\phi$ 
if $e$ and $e'$ are in conflict in $G$; 
\item 
For each pair of two edges $e$ and $e'$ in $E_c$, 
add the clause $(x_e \vee x_{e'})$ to $\phi$ if 
the vertices of $e$ and $e'$ induce a path of length 3 whose middle edge is in $F$ 
(see the forbidden configuration $(C)$ in Figure~\ref{fig:A-C}). 
%\item 
%For each pair of two edges $e$ and $e'$ in $E_c$, 
%add the clause $(x_e \vee x_{e'})$ to $\phi$ if 
%the vertices of $e$ and $e'$ induce a configuration $(A_1)$ or $(C)$; 
%\item 
%For each pair of two edges $e$ and $e'$ in $E_c$, 
%add the clause $(\overline{x_e} \vee \overline{x_{e'}})$ to $\phi$ if 
%the vertices of $e$ and $e'$ induce a configuration $(A_2)$. 
\end{itemize}
Then, we obtain the bipartition $(E_r, E_b)$ of $E_c$ 
from a truth assignment $\tau$ of the variables as follows: 
\begin{itemize}
\item $x_e = 0$ in $\tau$ $\iff$ $e \in E_r$ 
(or $x_e = 1$ in $\tau$ $\iff$ $e \in E_b$). 
\end{itemize}
It is obvious that 
a truth assignment $\tau$ satisfies $\phi$ if and only if 
the corresponding bipartition of $E_c$ is $(A, C)$-free 
and all the committed edges in $F$ are in $E_b$. 

The 2CNF formula $\phi$ has at most $m$ Boolean variables. 
We can also see that 
$\phi$ has at most $f + 2 \cdot \min\{m^2, \hat{m}(\hat{m}+f)\}$ clauses 
since $\phi$ has at most two clauses for each pair of two edges in $E_c$ or 
for each pair of a non-edge in $\hat{E}$ and an edge in $F$. 
Then, $\phi$ can be obtained in $O(\min\{m^2, \hat{m}(\hat{m}+f)\})$ time 
from $G$ and $F$. 
Since a satisfying truth assignment of a 2CNF formula can be computed 
in linear time (see~\cite{APT79-IPL} for example), we have the following. 
\begin{lemma}
An $(A, C)$-free bipartition of $E_c$ 
can be computed in $O(\min\{m^2, \hat{m}(\hat{m}+f)\})$ time. 
\end{lemma}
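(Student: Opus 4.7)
The plan is straightforward since the preceding text has already established the equivalence between satisfying assignments of $\phi$ and $(A,C)$-free bipartitions of $E_c$: a variable $x_e$ encodes whether the committed edge $e$ goes to $E_b$ (true) or $E_r$ (false); the unit clauses $(x_e)$ for $e \in F \cap E_c$ enforce that committed $F$-edges land in $E_b$; the pairs of clauses $(x_e \vee x_{e'})$ and $(\overline{x_e} \vee \overline{x_{e'}})$ for conflicting edges rule out $(A_1)$ and $(A_2)$; and the clauses $(x_e \vee x_{e'})$ for induced $P_4$'s with middle edge in $F$ rule out $(C)$. So the entire content of the lemma is the time bound, and given that 2SAT is solvable in linear time in the length of the formula, the real task is to bound both the size of $\phi$ and the time to build it by $O(\min\{m^2, \hat{m}(\hat{m}+f)\})$.

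I would carry out the counting in two independent ways and then take the minimum. First, there are at most $m$ variables and at most $f$ unit clauses. For the remaining clauses, both the conflict clauses and the $(C)$-clauses arise from \emph{pairs of edges in $E_c$}, so iterating over all ordered pairs of edges in $E_c$ and testing in constant time whether they are in conflict or whether they form a $P_4$ with middle edge in $F$ produces at most $O(m^2)$ clauses and runs in $O(m^2)$ time. This yields the first branch of the minimum.

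For the second branch, I would enumerate by non-edges. A conflict between $u_1v_1$ and $u_2v_2$ exists precisely when both $u_1v_2$ and $u_2v_1$ lie in $\hat{E}$, so enumerating ordered pairs $(u_1v_2, u_2v_1) \in \hat{E} \times \hat{E}$ with disjoint endpoints and checking in $O(1)$ whether $u_1v_1, u_2v_2 \in E_c$ produces all conflict clauses in $O(\hat{m}^2)$ time. Similarly, configuration $(C)$ is determined by a non-edge $u_1v_2 \in \hat{E}$ together with an $F$-edge $u_2v_1$, so iterating over $\hat{E} \times F$ and testing whether $u_1v_1$ and $u_2v_2$ are committed edges produces all $(C)$-clauses in $O(\hat{m}f)$ time. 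Together this is $O(\hat{m}(\hat{m}+f))$, giving the second branch.

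The only subtle point, and the one I would be most careful about, is ensuring that the two enumeration schemes really do cover all forbidden configurations without double counting something that inflates the count beyond the stated bound; this is where constant-time adjacency and non-adjacency queries (precomputed from the adjacency matrix or via hash tables on the edge sets $E$, $\hat{E}$, $F$, $E_c$) are needed, and the selection between the two branches is made at the outset based on the values of $m$, $\hat{m}$, and $f$. Once $\phi$ is built in $O(\min\{m^2, \hat{m}(\hat{m}+f)\})$ time with at most that many clauses, a linear-time 2SAT solver~\cite{APT79-IPL} finishes the work, and from its satisfying assignment the bipartition $(E_r, E_b)$ is read off in linear time, which establishes the lemma.
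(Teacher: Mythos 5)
Your proposal is correct and follows essentially the same route as the paper: build the 2CNF formula $\phi$, bound its size and construction time by counting clauses either over pairs of edges in $E_c$ (giving $O(m^2)$) or over pairs consisting of non-edges and, for configuration $(C)$, a non-edge together with an $F$-edge (giving $O(\hat{m}(\hat{m}+f))$), and finish with a linear-time 2SAT solver. The two enumeration branches and the minimum over them are exactly the accounting the paper uses.
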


\subsection{Swapping Edges}\label{sect:swap}
In this section, we show an $O(m\hat{m})$-time algorithm to transform 
a given $(A, C)$-free bipartition $(E_r, E_b)$ of $E_c$ 
into an $(A, B, C)$-free bipartition $(E_r', E_b')$ of $E_c$. 
For a non-edge $uv \in \hat{E}$, we define that 
\begin{align*}
H_r &= \{ u'v' \in E_r \mid uv', u'v \in E_b \}; \\
H_b &= \{ u'v' \in E_b \mid uv', u'v \in E_r \}; \\
H   &= H_r \cup H_b. 
\end{align*}
In other words, 
$H_r$ is the set of red edges of all configurations $(B_2)$ 
having non-edge $uv$, and 
$H_b$ is the set of blue edges of all configurations $(B_1)$ 
having non-edge $uv$. 
Between $E_r$ and $E_b$, we swap all edges in $H$ 
to obtain another bipartition $(E_r', E_b')$ of $E_c$, 
that is, we define that 
\begin{align*}
E_r' &= (E_r \setminus H_r) \cup H_b; \\
E_b' &= (E_b \setminus H_b) \cup H_r. 
\end{align*}
Since $(E_r, E_b)$ is $(A, C)$-free, we have $F \cap H = \emptyset$. 
Hence, all the committed edges in $F$ remain blue 
in the new bipartition $(E_r', E_b')$. 
Notice that by swapping the edges, 
we remove all the configurations $(B_1)$ and $(B_2)$ 
having non-edge $uv \in \hat{E}$. 
We claim that the swapping generates no forbidden configurations. 

\begin{figure*}[t]
  \psfrag{u}{$u$}
  \psfrag{v}{$v$}
  \psfrag{u1}{$u_1$}
  \psfrag{v1}{$v_1$}
  \psfrag{u2}{$u_2$}
  \psfrag{v2}{$v_2$}
  \psfrag{u3}{$u_2'$}
  \psfrag{v3}{$v_1'$}
  \psfrag{u4}{$u_2'$}
  \psfrag{v4}{$v_2'$}
  \centering
  \subcaptionbox{Case~1-1 in Lemma~\ref{lemma:swap} \label{fig:swap:1-1}}{\includegraphics[scale=.35]{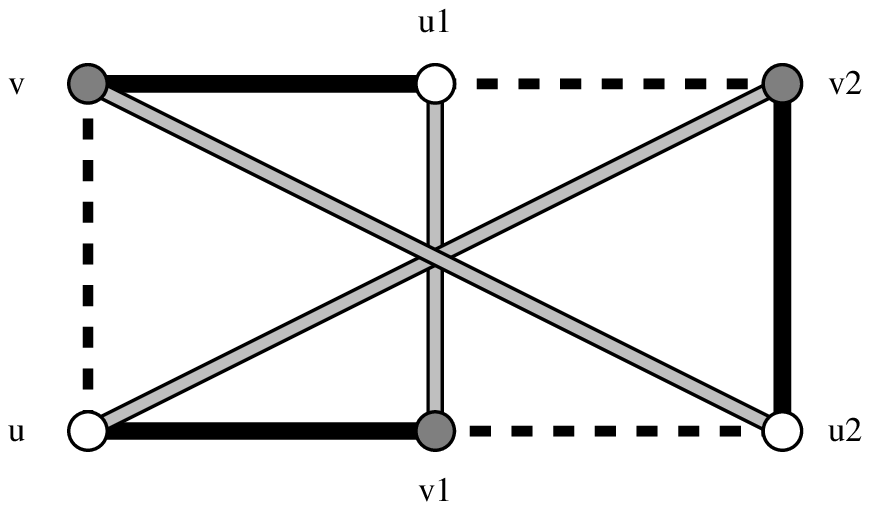}}
  \subcaptionbox{Case~3-1 in Lemma~\ref{lemma:swap} \label{fig:swap:3-1}}{\includegraphics[scale=.35]{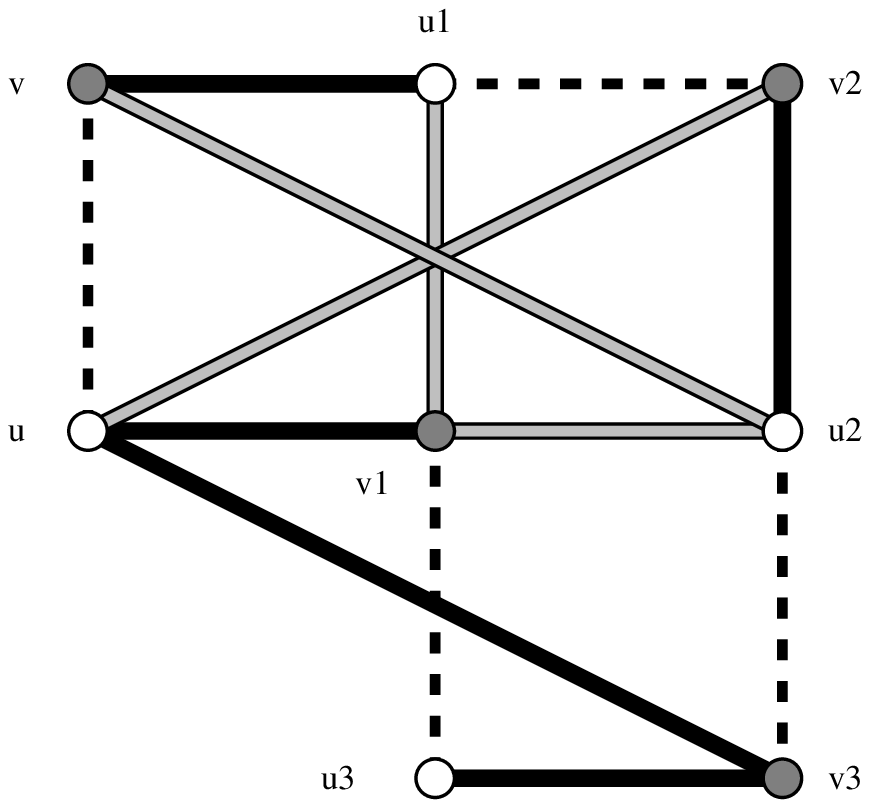}}
  \subcaptionbox{Case~3-3 in Lemma~\ref{lemma:swap} \label{fig:swap:3-3}}{\includegraphics[scale=.35]{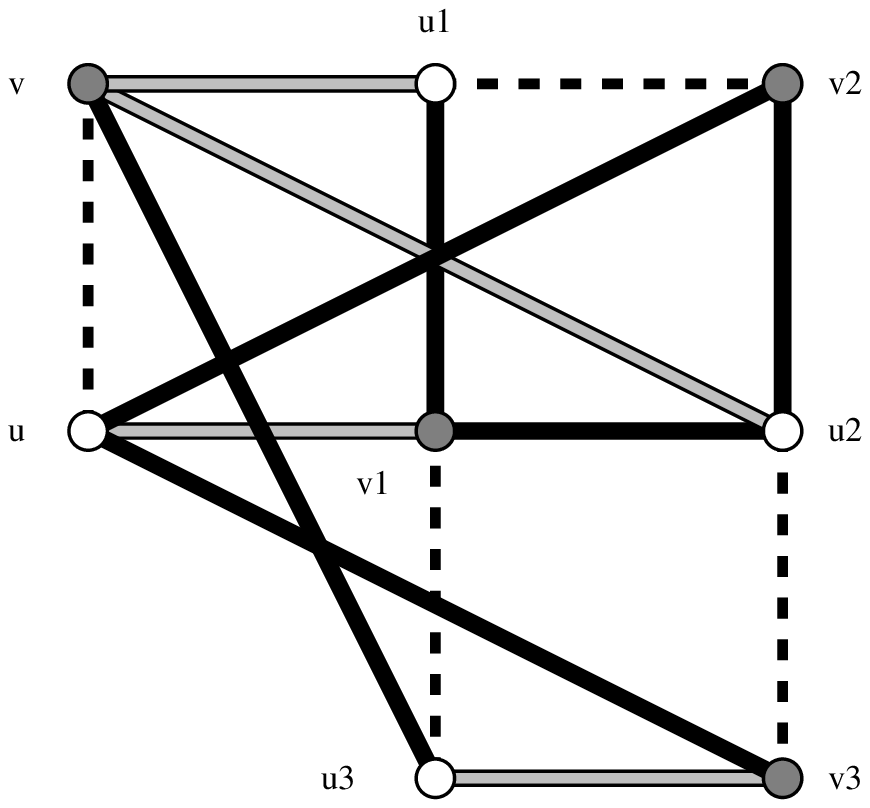}}
  \\
  \vspace{1em}
  \subcaptionbox{Case~5-1 in Lemma~\ref{lemma:swap} \label{fig:swap:5-1}}{\includegraphics[scale=.35]{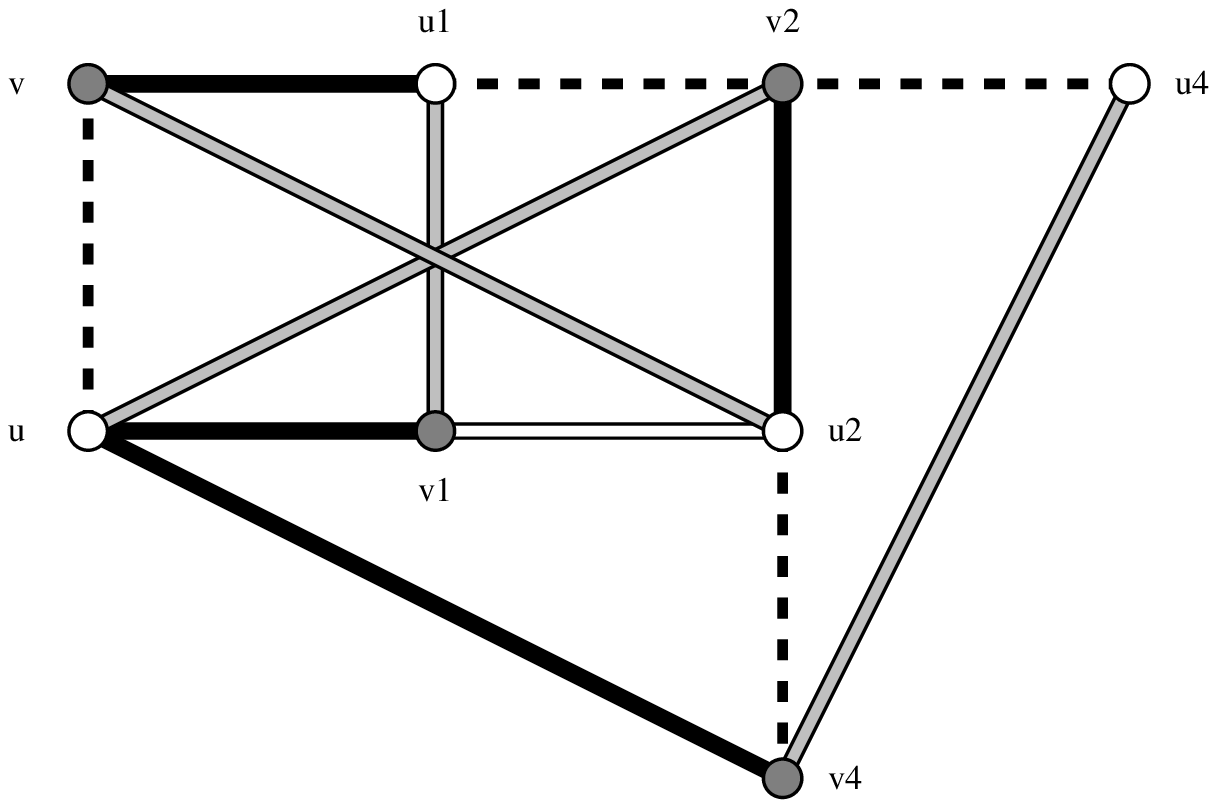}}
  \subcaptionbox{Case~2 in Lemma~\ref{lemma:G2} \label{fig:G2:2}}{\includegraphics[scale=.35]{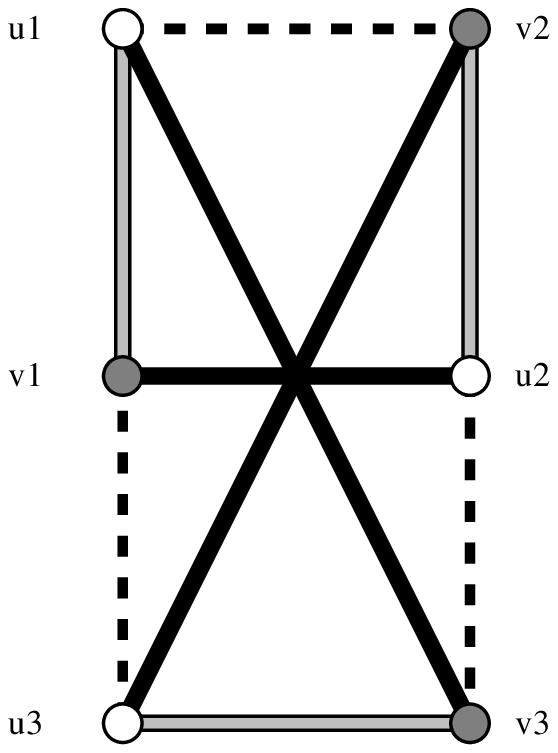}}
  \subcaptionbox{Case~3 in Lemma~\ref{lemma:G2} \label{fig:G2:3}}{\includegraphics[scale=.35]{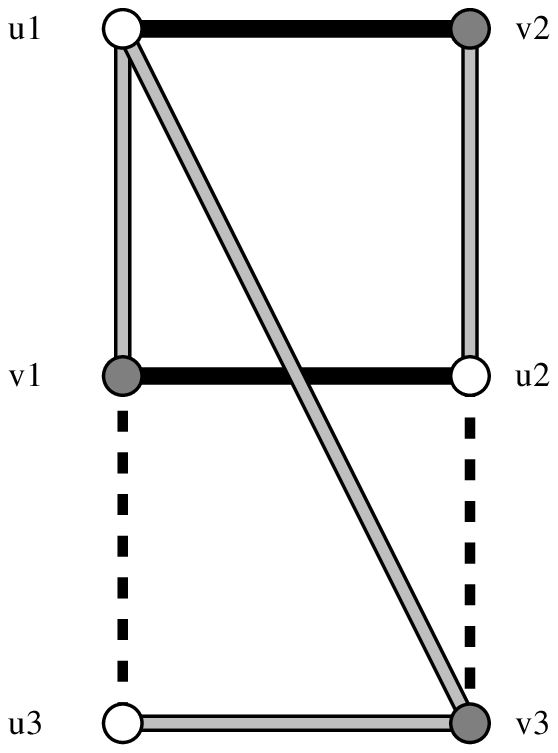}}
  \caption{
    Illustrating the proof of cases in Lemmas~\ref{lemma:swap} and~\ref{lemma:G2}. 
    Lines denote the same type of edges as in Figure~\ref{fig:A-C}. 
    } 
  \label{fig:proof}
\end{figure*}

\begin{lemma}\label{lemma:swap}
No edges in $H$ is an edge of any forbidden configurations 
of the new bipartition $(E_r', E_b')$ of $E_c$. 
\end{lemma}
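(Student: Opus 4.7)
The plan is to argue by contradiction. Fix an edge $e \in H$ and suppose that after the swap, $e$ lies in one of the forbidden configurations $(A_1), (A_2), (B_1), (B_2), (C)$ of the new bipartition $(E_r', E_b')$. By definition of $H$, there exists a specific non-edge $uv \in \hat{E}$ such that, writing $e = u'v'$, both $uv'$ and $u'v$ are edges of $G$ whose color in $(E_r, E_b)$ is the \emph{opposite} of that of $e$. This baseline gadget on vertices $\{u, v, u', v'\}$ is the anchor that I would exploit in every case.

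I would then do a two-level case distinction: first on whether $e \in H_r$ or $e \in H_b$ (i.e.\ whether $e$ flips from red to blue or vice versa), and second on which of the five forbidden configuration types $e$ allegedly belongs to. The red/blue symmetry halves the work, except that configuration $(C)$ needs separate treatment because it singles out $F$-edges. Here the identity $F \cap H = \emptyset$, noted just before the lemma, is crucial: no $F$-edge is swapped, so every $F$-edge keeps its blue color and in particular the middle edge of a $(C)$-configuration cannot be $e$ itself.

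In each case, let $e''$ be the second committed edge of the alleged forbidden configuration. The key step is to recover the color of $e''$ in the \emph{original} bipartition. Either $e'' \notin H$, in which case its color is unchanged by the swap, and superimposing the four vertices of $\{e, e''\}$ onto $\{u, v, u', v'\}$ immediately produces one of $(A_1), (A_2), (C)$ already present in $(E_r, E_b)$; or $e'' \in H$, in which case applying the definition of $H$ to $e''$ furnishes further edges incident to $u$ or $v$ with prescribed original colors, and these ancillary edges again force one of $(A_1), (A_2), (C)$ in the original bipartition. Either outcome contradicts the hypothesis that $(E_r, E_b)$ is $(A, C)$-free. The subcases referenced by Figure~\ref{fig:proof} (Cases~1-1, 3-1, 3-3, 5-1) are representative instances of this pattern.

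The main obstacle, as I see it, is bookkeeping rather than mathematical depth. Each subcase involves up to six vertices (the four of $\{e, e''\}$ together with $u, v$), some of which may coincide, and each pair among them is constrained to lie in one of $E_r, E_b, \hat{E}$, or $F$. The cleanest way to manage this, and presumably the reason the authors present pictorial subcases, is to draw each configuration as a small labelled diagram and read off the resulting original $(A_1)$, $(A_2)$, or $(C)$ directly. I would organize the write-up by listing the five types of configurations and, within each, tabulating the possibilities for the original color of $e''$ and the coincidence pattern of its endpoints with $\{u,v\}$.
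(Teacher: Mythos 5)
Your overall framing---anchor the swapped edge $e=u'v'$ to its defining non-edge $uv$ with $uv',u'v$ of the opposite original color, then case on the configuration type and on which of its edges lie in $H$---matches the paper's strategy. But the mechanism you propose for closing each case is not the one that works, and as stated it fails. Take the very first case (configuration $(A_1)$ with $u_1v_1\in H$, $u_2v_2\notin H$): in the original bipartition $u_1v_1\in E_b$ and $u_2v_2\in E_r$, so the four vertices of $\{e,e''\}$ carry one blue and one red edge across a $2K_2$---that is precisely what a legal bipartition does with conflicting edges, not a forbidden configuration. Superimposing onto $\{u,v,u',v'\}$ does not ``immediately'' produce an $(A_1)$, $(A_2)$, or $(C)$ either, because the status of the pairs $uv_2$ and $u_2v$ is unknown a priori. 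The actual argument must first force $uv_2,u_2v\in E$ (each absence would create an original $(A_1)$ among known red edges), then pin down their colors as blue (each conflicts with a known red edge), and only then conclude---and this is the step your proposal lacks entirely---that $u_2v_2$ itself satisfies the membership condition for $H_r$ relative to the same non-edge $uv$, contradicting $u_2v_2\notin H$. The decisive contradiction in the main subcases is ``this edge should also have been swapped,'' not ``the original bipartition was not $(A,C)$-free.''

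A second gap: in the harder subcases (the paper's Cases 3-1, 3-3, 5-1) the chain of deductions additionally requires invoking the defining property of committed edges---a red committed edge must have a blue conflicting partner and vice versa (since $(A_1)$ and $(A_2)$ are excluded)---which introduces two fresh vertices beyond the six you budget for, and several further rounds of ``this pair must be an edge, and its color is forced.'' Without both ingredients (the $H$-membership contradiction and the conflicting-partner device) the bookkeeping you describe cannot terminate; the difficulty here is not merely organizational. Your observations that $F\cap H=\emptyset$ protects the $F$-edge of a $(C)$-configuration, and that the both-edges-in-$H$ subcases collapse quickly (two red edges incident to $u$ and to $v$ conflict, giving an original $(A_1)$), are correct and do match the paper.
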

\begin{proof}
We assume that the new bipartition $(E_r', E_b')$ has 
some configuration with at least one edge in $H$, 
and obtain a contradiction. 

\textbf{Case 1:} 
Suppose $(E_r', E_b')$ has a configuration $(A_1)$, 
that is, there are four vertices $u_1, v_1, u_2, v_2$ with 
$u_1v_1, u_2v_2 \in E_r'$ and $u_1v_2, u_2v_1 \in \hat{E}$. 

\textbf{Case 1-1:} 
Suppose $u_1v_1 \in H$ and $u_2v_2 \notin H$. 
This implies that $u_1v_1 \in E_b$ and $u_2v_2, uv_1, u_1v \in E_r$. 
See Figure~\ref{fig:proof}\subref{fig:swap:1-1}. 
We have $uv_2 \in E$, for otherwise 
$uv_1 \in E_r$ and $u_2v_2 \in E_r$ would be in conflict in $G$. 
Since $uv_2$ and $u_1v \in E_r$ are in conflict in $G$, we have $uv_2 \in E_b$. 
Similarly, we have $u_2v \in E$, for otherwise 
$u_1v \in E_r$ and $u_2v_2 \in E_r$ would be in conflict in $G$. 
Since $u_2v$ and $uv_1 \in E_r$ are in conflict in $G$, we have $u_2v \in E_b$. 
However, we have from $uv_2, u_2v \in E_b$ that 
$u_2v_2 \in H_r$, a contradiction. 

\textbf{Case 1-2:} 
Suppose $u_2v_2 \in H$ and $u_1v_1 \notin H$. 
This case is symmetric to Case~1-1. 

\textbf{Case 1-3:} 
Suppose $u_1v_1, u_2v_2 \in H$. 
This implies that $u_1v_1, u_2v_2 \in E_b$ and $uv_1, u_1v,$ $uv_2, u_2v \in E_r$, 
but $u_1v \in E_r$ and $uv_2 \in E_r$ are in conflict in $G$, a contradiction. 

\textbf{Case 2:} 
Suppose $(E_r', E_b')$ has a configuration $(A_2)$. 
This case is symmetric to Case~1. 

\textbf{Case 3:} 
Suppose $(E_r', E_b')$ has a configuration $(B_1)$, 
that is, there are four vertices $u_1, v_1, u_2, v_2$ with 
$u_1v_1, u_2v_2 \in E_r'$, $u_2v_1 \in E_b'$, and $u_1v_2 \in \hat{E}$. 

\textbf{Case 3-1:} 
Suppose $u_1v_1 \in H$ and $u_2v_2, u_2v_1 \notin H$. 
This implies that $u_1v_1, u_2v_1 \in E_b$ and $u_2v_2, uv_1, u_1v \in E_r$. 
See Figure~\ref{fig:proof}\subref{fig:swap:3-1}. 
We have $u_2v \in E$, for otherwise 
$u_1v \in E_r$ and $u_2v_2 \in E_r$ would be in conflict in $G$. 
If $u_2v \in E_r$, then we have from $uv_1 \in E_r$ that 
$u_2v_1 \in H_b$, a contradiction. 
Therefore, $u_2v \in E_b \cup E_u$. 
Since $u_2v_1 \in E_b$, there is an edge $u_2'v_1' \in E_r$ such that 
$u_2v_1$ and $u_2'v_1'$ are in conflict in $G$, 
that is, $u_2v_1', u_2'v_1 \in \hat{E}$. 
We have $uv_1' \in E$, for otherwise 
$uv_1 \in E_r$ and $u_2'v_1' \in E_r$ would be in conflict in $G$. 
Since $uv_1'$ and $u_2v \in E_b \cup E_u$ are in conflict in $G$, 
we have $uv_1' \in E_r$ and $u_2v \in E_b$. 
Then, we have $uv_2 \in E$, for otherwise 
$uv_1' \in E_r$ and $u_2v_2 \in E_r$ would be in conflict in $G$. 
Since $uv_2$ and $u_1v \in E_r$ are in conflict in $G$, we have $uv_2 \in E_b$. 
However, we have from $uv_2, u_2v \in E_b$ that 
$u_2v_2 \in H_r$, a contradiction. 

\textbf{Case 3-2:} 
Suppose $u_2v_2 \in H$ and $u_1v_1, u_2v_1 \notin H$. 
This case is symmetric to Case~3-1. 

\textbf{Case 3-3:} 
Suppose $u_2v_1 \in H$ and $u_1v_1, u_2v_2 \notin H$. 
This implies that $u_1v_1, u_2v_2, u_2v_1 \in E_r$ and $uv_1, u_2v \in E_b$. 
See Figure~\ref{fig:proof}\subref{fig:swap:3-3}. 
Since $u_2v_1 \in E_r$, there is an edge $u_2'v_1' \in E_b$ such that 
$u_2v_1$ and $u_2'v_1'$ are in conflict in $G$, 
that is, $u_2v_1', u_2'v_1 \in \hat{E}$. 
We have $uv_1' \in E$, for otherwise 
$uv_1 \in E_b$ and $u_2'v_1' \in E_b$ would be in conflict in $G$. 
Since $uv_1'$ and $u_2v \in E_b$ are in conflict in $G$, we have $uv_1' \in E_r$. 
Then, we have $uv_2 \in E$, for otherwise 
$uv_1' \in E_r$ and $u_2v_2 \in E_r$ would be in conflict in $G$. 
If $uv_2 \in E_b$, then we have from $u_2v \in E_b$ that 
$u_2v_2 \in H_r$, a contradiction. 
Therefore, $uv_2 \in E_r \cup E_u$. 
Similarly, we have $u_2'v \in E$, for otherwise 
$u_2v \in E_b$ and $u_2'v_1' \in E_b$ would be in conflict in $G$. 
Since $u_2'v$ and $uv_1 \in E_b$ are in conflict in $G$, we have $u_2'v \in E_r$. 
Then, we have $u_1v \in E$, for otherwise 
$u_1v_1 \in E_r$ and $u_2'v \in E_r$ would be in conflict in $G$. 
Since $u_1v$ and $uv_2 \in E_r \cup E_u$ are in conflict in $G$, 
we have $u_1v \in E_b$ and $uv_2 \in E_r$. 
However, we have from $uv_1 \in E_b$ that $u_1v_1 \in H_r$, a contradiction. 

\textbf{Case 3-4:} 
Suppose $u_1v_1, u_2v_2 \in H$ and $u_2v_1 \notin H$. 
We have a contradiction as Case 1-3. 

\textbf{Case 3-5:} 
Suppose $u_1v_1, u_2v_1 \in H$ and $u_2v_2 \notin H$. 
This implies that 
$u_1v_1 \in H_b$ and $u_2v_1 \in H_r$, but it follows that 
$uv_1 \in E_r$ from $u_1v_1 \in H_b$ and 
$uv_1 \in E_b$ from $u_2v_1 \in H_r$, a contradiction. 

\textbf{Case 3-6:} 
Suppose $u_2v_2, u_2v_1 \in H$ and $u_1v_1 \notin H$. 
This case is symmetric to Case~3-5. 

\textbf{Case 3-7:} 
Suppose $u_1v_1, u_2v_2, u_2v_1 \in H$. 
We have a contradiction as Case~3-5. 

\textbf{Case 4:} 
Suppose $(E_r', E_b')$ has a configuration $(B_2)$. 
This case is symmetric to Case~3. 

\textbf{Case 5:} 
Suppose $(E_r', E_b')$ has a configuration $(C)$, 
that is, there are four vertices $u_1, v_1, u_2, v_2$ with 
$u_1v_1, u_2v_2 \in E_r'$, $u_1v_2 \in \hat{E}$, and $u_2v_1 \in F$. 
Since the bipartition $(E_r, E_b)$ is $(A, C)$-free, we have $u_2v_1 \notin H$. 

\textbf{Case 5-1:} 
Suppose $u_1v_1 \in H$ and $u_2v_2 \notin H$. 
This implies that $u_1v_1 \in E_b$ and $u_2v_2, uv_1, u_1v \in E_r$. 
See Figure~\ref{fig:proof}\subref{fig:swap:5-1}. 
We have $uv_2 \in E$, for otherwise 
the vertices $u, v_1, u_2, v_2$ would induce a configuration $(C)$. 
Since $uv_2$ and $u_1v \in E_r$ are in conflict in $G$, we have $uv_2 \in E_b$. 
Similarly, we have $u_2v \in E$, for otherwise 
$u_1v \in E_r$ and $u_2v_2 \in E_r$ would be in conflict in $G$. 
If $u_2v \in E_r$, then 
the vertices $u, v_1, u_2, v$ would induce a configuration $(C)$. 
Therefore, $u_2v \in E_b \cup E_u$. 
Since $u_2v_2 \in E_r$, there is an edge $u_2'v_2' \in E_b$ 
such that $u_2v_2$ and $u_2'v_2'$ are in conflict in $G$, 
that is, $u_2v_2', u_2'v_2 \in \hat{E}$. 
We have $uv_2' \in E$, for otherwise $u_2'v_2' \in E_b$ and $uv_2 \in E_b$ 
would be in conflict in $G$. 
Since $uv_2'$ and $u_2v \in E_b \cup E_u$ are in conflict in $G$, 
we have $uv_2' \in E_r$ and $u_2v \in E_b$. 
However, we have from $uv_2, u_2v \in E_b$ that 
$u_2v_2 \in H_r$, a contradiction. 

\textbf{Case 5-2:} 
Suppose $u_2v_2 \in H$ and $u_1v_1 \notin H$. 
This case is symmetric to Case~5-1. 

\textbf{Case 5-3:} 
Suppose $u_1v_1, u_2v_2 \in H$. 
We have a contradiction as Case~1-3. 

Since all the cases above lead to contradictions, 
we conclude that the new bipartition $(E_r', E_b')$ has 
no forbidden configurations with an edge in $H$. 
\qed
\end{proof}

It follows from Lemma~\ref{lemma:swap} that 
continuing in this way for each non-edge in $\hat{E}$, 
we can obtain an $(A, B, C)$-free bipartition of $E_c$. 
Since the set $H$ can be computed in $O(m)$ time 
for each non-edge in $\hat{E}$, 
the overall running time is $O(m\hat{m})$. 
\begin{lemma}
From a given $(A, C)$-free bipartition of $E_c$, 
an $(A, B, C)$-free bipartition of $E_c$ 
can be computed in $O(m\hat{m})$ time. 
\end{lemma}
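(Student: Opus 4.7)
My plan is to iterate the swap operation from Lemma~\ref{lemma:swap} over every non-edge $uv \in \hat{E}$. I start from the given $(A,C)$-free bipartition and, at each step, compute $H = H_r \cup H_b$ for the current bipartition and the non-edge $uv$ being processed, then swap the edges of $H$ between the two colour classes, passing the resulting bipartition to the next iteration.

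The correctness argument rests on two consequences of Lemma~\ref{lemma:swap}, which asserts that no edge of $H$ appears in any forbidden configuration of the bipartition produced by the current swap. First, any forbidden configuration still present after a swap must consist entirely of edges whose colour is unchanged by this swap, so in particular no $(A_1)$, $(A_2)$, or $(C)$ configuration can be newly created; hence $(A,C)$-freeness is preserved throughout. Second, by inspection of the definitions of $H_r$ and $H_b$, every $(B_1)$ (respectively $(B_2)$) configuration whose unique non-edge is exactly $uv$ has its blue edge $u_2v_1$ in $H_b$ (respectively its red edge $u_2v_1$ in $H_r$), and is therefore destroyed by the current swap.

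Combining these two facts, once every non-edge in $\hat{E}$ has been processed, every $(B_1)$ or $(B_2)$ configuration has been eliminated: each such configuration has a well-defined non-edge $u_1v_2 \in \hat{E}$, and it was destroyed the last time this non-edge was handled and never recreated afterwards, by the invariant. Together with the preserved $(A,C)$-freeness, this yields the desired $(A,B,C)$-free bipartition.

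For the running time, I would show that, for a fixed non-edge $uv$, the set $H$ can be computed in $O(m)$ time by scanning the edges incident to $u$ and to $v$ and testing the simple membership conditions for $H_r$ and $H_b$; the actual relabelling is linear in $|H|\leq m$. Summed over the $\hat{m}$ non-edges this yields $O(m\hat{m})$. The only delicate point is the monotonicity of the process, namely that a swap at $uv$ cannot introduce a forbidden configuration at some other non-edge that was already handled; but this is exactly what Lemma~\ref{lemma:swap} rules out, so no additional case analysis is needed.
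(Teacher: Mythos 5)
Your proposal is correct and follows exactly the paper's approach: iterate the swap over all non-edges, use Lemma~\ref{lemma:swap} to conclude that no forbidden configuration is ever created (so $(A,C)$-freeness persists and destroyed $(B_1)$/$(B_2)$ configurations never reappear), and charge $O(m)$ per non-edge for computing $H$. You merely make explicit the monotonicity invariant that the paper leaves implicit in its one-line justification.
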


\subsection{Adding edges}\label{sect:add}
In this section, we claim that 
a given $(A, B, C)$-free bipartition $(E_r, E_b)$ of $E_c$ can be extended 
in linear time into a 2-chain subgraph cover $(G_1, G_2)$ of $G$ 
such that $G_1$ has no edges in $F$. 
We first show the following. 
\begin{lemma}\label{lemma:G2}
The subgraph of $G$ induced by $E_b \cup E_u$ is a chain graph. 
\end{lemma}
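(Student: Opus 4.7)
The plan is to argue by contradiction: suppose the subgraph of $G$ induced by $E_b \cup E_u$ contains an induced $2K_2$, given by two edges $u_1v_1, u_2v_2 \in E_b \cup E_u$ whose other diagonal pair $u_1v_2, u_2v_1$ must therefore lie in $\hat{E} \cup E_r$ (the only places an edge of $G$ can sit outside $E_b \cup E_u$). I would then split into three cases according to how many of $u_1v_2, u_2v_1$ belong to $\hat{E}$. If both lie in $\hat{E}$, then $u_1v_1$ and $u_2v_2$ are in conflict in $G$, hence committed, hence both in $E_b$, giving exactly configuration $(A_2)$ and contradicting the $(A,B,C)$-freeness of $(E_r, E_b)$.

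If exactly one of them, say $u_2v_1$, lies in $E_r$ and the other in $\hat{E}$, then when $u_1v_1, u_2v_2 \in E_b$ the four edges already form configuration $(B_2)$; otherwise at least one of $u_1v_1, u_2v_2$ belongs to $E_u$. I would invoke the blue conflict partner $u'v' \in E_b$ of $u_2v_1$ (so $u_2v', u'v_1 \in \hat{E}$) and argue by the standard hole-filling that $u_1v', u'v_2 \in E$: if for instance $u_1v' \notin E$, then $u_1v_1$ and $u'v'$ are themselves in conflict, giving $(A_2)$ when $u_1v_1 \in E_b$ or contradicting uncommittedness when $u_1v_1 \in E_u$. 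Once these auxiliary edges are in place, $u_1v'$ is in conflict with $u_2v_2$ (via $u_1v_2, u_2v' \in \hat{E}$) and $u'v_2$ with $u_1v_1$ (via $u'v_1, u_1v_2 \in \hat{E}$), so whichever of $u_1v_1, u_2v_2$ lies in $E_u$ is forced into a genuine conflict and contradicts its uncommittedness.

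The main obstacle is the third case, $u_1v_2, u_2v_1 \in E_r$. Here I would pick blue conflict partners $u''v''$ of $u_1v_2$ and $u'v'$ of $u_2v_1$ and run the same hole-filling to put $u''v_1, u_2v'', u_1v', u'v_2$ into $E$. Two coupled observations then close the case. First, since $u'v'$ and $u''v''$ are both blue they cannot be in conflict---that would be configuration $(A_2)$---so at least one of $u'v'', u''v'$ is an edge of $G$, and whichever it is conflicts with $u_1v_1$ or $u_2v_2$ through the $\hat{E}$-diagonals already at hand, killing the $E_u$ subcases and forcing that auxiliary edge to be red whenever the corresponding $u_iv_i$ is blue. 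Second, $u_1v'$ and $u_2v''$ are themselves in conflict (both diagonals $u_1v'', u_2v'$ lie in $\hat{E}$) and therefore receive opposite colors; pairing the red member of this pair with the appropriate edge in $\{u_1v_2, u_2v_1\}$ exposes a forbidden $(B_1)$ using the blue $u_iv_i$ as the last corner, while pairing the blue member with $u'v'$ or $u''v''$ yields either a direct $(A_2)$ on two blue edges whose diagonals both lie in $\hat{E}$, or is ruled out by the first observation. Tracking which auxiliary edges must take which color, and then matching the resulting pattern against one of the forbidden shapes, is the delicate bookkeeping that makes Case~3 the main technical point of the proof.
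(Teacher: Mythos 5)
Your proposal is correct, and for Cases~1 and~2 it essentially coincides with the paper's argument: the same single blue conflict partner of the red diagonal, the same hole-filling of $u_1v'$ and $u'v_2$, and the same conclusion that both $u_1v_1$ and $u_2v_2$ are forced into $E_b$, yielding configuration $(B_2)$. Where you genuinely diverge is Case~3 (both diagonals red). The paper dispatches this case with a two-step reduction: it takes only one blue partner $u_2'v_1'$ of $u_2v_1$, observes that $u_1v_1'$ cannot lie in $\hat{E}\cup E_r$ (else the quadruple $u_1,v_1,u_2',v_1'$ is an instance of Case~1 or Case~2), and then notes that $u_1v_1'\in E_b\cup E_u$ makes the quadruple $u_2,v_2,u_1,v_1'$ an instance of Case~2 --- so Case~3 inherits its contradiction from the cases already settled. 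You instead attack Case~3 head-on with two blue partners, four hole-filled auxiliary edges, the forced conflict between $u_1v'$ and $u_2v''$, and a final split on which of them is red; I checked the bookkeeping (in the subcase $u_1v'\in E_r$, $u_2v''\in E_b$, either $u'v''\in\hat{E}$ gives $(A_2)$ on $u_2v''$ and $u'v'$, or $u'v''\in E$ forces $u_1v_1\in E_b$ and then $u_2v_1,u_1v'\in E_r$ with $u_2v'\in\hat{E}$ and $u_1v_1\in E_b$ is a forbidden $(B)$-configuration, and symmetrically for the other subcase) and it does close. Your version is self-contained but noticeably heavier; the paper's version buys brevity by organizing the three cases as a hierarchy in which each reduces to its predecessors, at the cost of the reduction being slightly less transparent on first reading.
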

\begin{proof}
We show that no $2K_2$ is in the subgraphs of $G$ induced by $E_b \cup E_u$. 

\textbf{Case~1:} Suppose $u_1v_1, u_2v_2 \in E_b \cup E_u$ and 
$u_1v_2, u_2v_1 \in \hat{E}$. 
It is obvious that $u_1v_1, u_2v_2 \notin E_u$, 
but $u_1v_1, u_2v_2 \in E_b$ implies that 
the vertices $u_1, v_1, u_2, v_2$ induce a configuration $(A_2)$, 
a contradiction. 

\textbf{Case~2:} Suppose $u_1v_1, u_2v_2 \in E_b \cup E_u$, 
$u_1v_2 \in \hat{E}$, and $u_2v_1 \in  E \setminus (E_b \cup E_u)$. 
Since $u_2v_1 \in E \setminus (E_b \cup E_u) = E_r$, there is an edge 
$u_2'v_1' \in E_b$ such that $u_2v_1$ and $u_2'v_1'$ are in conflict in $G$, 
that is, $u_2'v_1, u_2v_1' \in \hat{E}$. 
See Figure~\ref{fig:proof}\subref{fig:G2:2}. 
We have $u_1v_1' \in E$, for otherwise the vertices 
$u_1, v_1, u_2', v_1'$ would induce a configuration in Case~1. 
Since $u_1v_1'$ and $u_2v_2 \in E_b \cup E_u$ are in conflict in $G$, 
we have $u_2v_2 \in E_b$. 
Similarly, we have $u_2'v_2 \in E$, for otherwise the vertices 
$u_2, v_2, u_2', v_1'$ would induce a configuration in Case~1. 
Since $u_2'v_2$ and $u_1v_1 \in E_b \cup E_u$ are in conflict in $G$, 
we have $u_1v_1 \in E_b$, 
but then the vertices $u_1, v_1, u_2, v_2$ induce a configuration $(B_2)$, 
a contradiction. 

\textbf{Case~3:} Suppose $u_1v_1, u_2v_2 \in E_b \cup E_u$ 
and $u_1v_2, u_2v_1 \in E \setminus (E_b \cup E_u)$. 
Since $u_2v_1 \in E \setminus (E_b \cup E_u) = E_r$, there is an edge 
$u_2'v_1' \in E_b$ such that $u_2v_1$ and $u_2'v_1'$ are in conflict in $G$, 
that is, $u_2'v_1, u_2v_1' \in \hat{E}$. 
See Figure~\ref{fig:proof}\subref{fig:G2:3}. 
We have $u_1v_1' \notin \hat{E} \cup E_r$, for otherwise the vertices 
$u_1, v_1, u_2', v_1'$ would induce a configuration in Case~1 or Case~2. 
However, $u_1v_1' \in E_b \cup E_u$ implies that 
the vertices $u_2, v_2, u_1, v_1'$ induce 
a configuration in Case~2, a contradiction. 

Since all the cases above lead to contradictions, 
we conclude that the subgraph of $G$ induced by $E_b \cup E_u$ has 
no $2K_2$, and it is a chain subgraph of $G$. 
\qed
\end{proof}

We next show that $E_r$ can be extended into a chain graph in $G - F$, 
the subgraph of $G$ obtained by removing all the edges in $F$. 
To do this, we consider the following problem: 
Given a graph $H$ and a set $M$ of edges of $H$, 
find a chain subgraph $C$ of $H$ containing all edges in $M$. 
This problem is called the \emph{chain graph sandwich problem}, and 
the chain graph $C$ is called a \emph{chain completion} of $M$ in $H$. 
Although the chain graph sandwich problem is NP-complete, 
it can be solved in linear time 
if $H$ is a bipartite graph~\cite{DFGKM11-AOR}. 
The chain graph sandwich problem on bipartite graphs is closely related to 
the threshold graph sandwich problem~\cite{GKS95-JAL,RS95-STOC} 
(see also Section 1.5 of~\cite{MP95-book}), 
and in the proof of Lemma~\ref{lemma:sandwich}, 
we will use an argument similar to that used in the literature. 

Let $H = (U, V, E)$ be a bipartite graph, 
let $\hat{E}$ be the set of edges of the bipartite complement $\hat{H}$ of $H$, 
and let $k \geq 2$. 
A set of $k$ distinct vertices $u_0, u_1, \ldots, u_{k-1}$ in $U$ and 
$k$ distinct vertices $v_0, v_1, \ldots, v_{k-1}$ in $V$ is called 
an \emph{alternating cycle of $M$ relative to $H$} 
if $u_iv_i \in \hat{E}$ and $u_{i+1}v_i \in M$ for any $i$, $0 \leq i < k$ 
(indices are modulo $k$). 
Note that an alternating cycle of $M$ with lengh 4 relative to $H$ 
is exactly a $2K_2$ of $M$ in $H$. 
\begin{lemma}~\label{lemma:sandwich}
Let $M$ be a set of edges in a bipartite graph $H$. 
\begin{itemize}
\item The set $M$ of edges has a chain completion in $H$ 
if and only if there are no alternating cycles of $M$ relative to $H$. 
\item The chain completion of $M$ in $H$ can be computed in $O(n+m)$ time. 
\end{itemize}
\end{lemma}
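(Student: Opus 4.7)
The plan is to prove the characterization by splitting into necessity and sufficiency and then to establish the $O(n+m)$ bound by appealing to the cited linear-time chain graph sandwich algorithm. Necessity will come from the neighborhood-inclusion structure of chain graphs; sufficiency will be established constructively by topologically ordering an auxiliary digraph on $U$ and defining the chain completion by increasing unions of $M$-neighborhoods.

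For necessity, let $C$ be a chain completion of $M$ in $H$ and let $u_0,\ldots,u_{k-1}$, $v_0,\ldots,v_{k-1}$ be an alternating cycle of $M$ relative to $H$. The inclusion $M\subseteq E(C)$ gives $v_i\in N_C(u_{i+1})$, while $u_iv_i\in\hat{E}$ (so $u_iv_i\notin E(H)\supseteq E(C)$) gives $v_i\notin N_C(u_i)$. Hence $N_C(u_{i+1})\not\subseteq N_C(u_i)$ for every $i$; the linear-order-by-inclusion property of chain graphs on the $U$-side then forces $N_C(u_i)\subsetneq N_C(u_{i+1})$ cyclically around the $k$ indices, a contradiction.

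For sufficiency, I would introduce a digraph $D$ on vertex set $U$ with an arc $u\to u'$ whenever some $v\in V$ witnesses $uv\in\hat{E}$ and $u'v\in M$. The key observation is that a shortest directed cycle of $D$ admits pairwise distinct witnesses $v_i$: if $v_i=v_j$ for $i<j$, then $u_iv_i\in\hat{E}$ and $u_{j+1}v_i\in M$ yield an arc $u_i\to u_{j+1}$, shortcutting the cycle. Thus the absence of alternating cycles makes $D$ acyclic. Fix any topological order $u_1,u_2,\ldots,u_{|U|}$ of $U$ consistent with $D$ and define
\[
N_C(u_i)\;=\;\bigcup_{j\le i} N_M(u_j).
\]
These sets are visibly nested and contain $M$; the only nontrivial check is $N_C(u_i)\subseteq N_H(u_i)$, and if some $v\in N_M(u_j)$ with $j\le i$ were not in $N_H(u_i)$, then $u_iv\in\hat{E}$ would give an arc $u_i\to u_j$ in $D$, violating the topological order.

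The hard part is the linear-time bound. Both $\hat{E}$ and $D$ can be much larger than $|E(H)|$, so the construction must be carried out implicitly, accessing only adjacency data in $H$ and $M$. I would invoke the linear-time chain graph sandwich algorithm of~\cite{DFGKM11-AOR}, which realizes essentially the topological construction above through a bucket-refinement scheme that amortizes the work so that each edge of $H$ is examined only a constant number of times. Its correctness rests on exactly the argument above, and the main delicacy is the running-time accounting rather than any additional conceptual content beyond what is needed for the characterization.
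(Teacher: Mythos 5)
Your argument is correct, but it follows a genuinely different route from the paper's. For necessity, the paper invokes the fact that every chain graph has an isolated or a dominating vertex and applies it to the subgraph of $C$ induced by the cycle vertices (no vertex there can be isolated, since each meets an edge of $M$, nor dominating, since each meets a non-edge of $\hat{E}$); you instead derive a strictly increasing cyclic chain $N_C(u_0)\subsetneq N_C(u_1)\subsetneq\cdots\subsetneq N_C(u_0)$ from the linear-order-by-inclusion property, which is equally valid and arguably more self-contained. For sufficiency, the paper argues by induction: the absence of alternating cycles forces the existence of a vertex incident to no edge of $M$ or to no non-edge of $\hat{E}$ (otherwise one grows an alternating path into a cycle), and that vertex is peeled off and reinserted as isolated or dominating. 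You instead give a one-shot global construction: the auxiliary digraph $D$ on $U$, acyclicity via the distinct-witnesses shortcut argument on a shortest cycle, and prefix unions of $M$-neighborhoods along a topological order. Both are sound; what each buys is different. The paper's peeling induction \emph{is} the linear-time algorithm --- each step is a local choice maintainable with degree counters --- so the $O(n+m)$ claim falls out of the correctness proof almost for free. Your construction is cleaner as a static description, but $D$ may have $\Theta(|U|^2)$ arcs, so the linear-time bound does not follow from your proof as stated and must be delegated (as you do) to the implementation in the cited sandwich algorithm; you are right to flag this as the delicate point, and it is the one place where your write-up leans on the reference more heavily than the paper does.
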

\begin{proof}
The proof is in Appendix. 
The details of the algorithm are also shown in~\cite{DFGKM11-AOR}. 
\qed
\end{proof}

Then, we show that $E_r$ has a chain completion in $G - F$. 
\begin{lemma}~\label{lemma:G1}
There are no alternating cycles of $E_r$ relative to $G - F$. 
\end{lemma}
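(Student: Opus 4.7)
The plan is to argue by contradiction. Suppose there exists an alternating cycle of $E_r$ relative to $G - F$, and let $(u_0, v_0, u_1, v_1, \ldots, u_{k-1}, v_{k-1})$ be one of minimum length $k \geq 2$, so that $u_{i+1}v_i \in E_r$ and $u_iv_i \in \hat{E} \cup F$ for all $i$ (indices modulo $k$). I would first reduce to the case $k = 2$ via a chord argument and then dispatch the base case by matching the forbidden configurations.

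For $k \geq 3$, consider the chord $u_1 v_{k-1}$ together with the red edges $u_1 v_0, u_0 v_{k-1} \in E_r$ and the non-edge $u_0 v_0 \in \hat{E} \cup F$. If $u_1 v_{k-1} \in E_r$, deleting the pair $(u_0, v_0)$ from the cycle and using $u_1 v_{k-1}$ as a shortcut gives an alternating cycle of length $k - 1$, contradicting minimality. If $u_1 v_{k-1} \in \hat{E} \cup F$, then the quadruple $u_0, v_0, u_1, v_{k-1}$ itself forms a length-$2$ alternating cycle, again contradicting minimality. Otherwise $u_1 v_{k-1} \in E_b \cup E_u$; when moreover $u_1 v_{k-1} \in E_b$ and $u_0 v_0 \in \hat{E}$, the quadruple realises configuration $(B_1)$, contradicting $(A, B, C)$-freeness. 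The subcases in which $u_0 v_0 \in F$ or $u_1 v_{k-1} \in E_u$ need more care: I would use that each red edge is committed, hence has a conflict partner in $E_b$, and then track the edges forced by the forbidden-configuration analysis until an inconsistency appears, following the style of the case analyses in Lemmas~\ref{lemma:swap} and~\ref{lemma:G2}.

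For the base case $k = 2$, the cycle consists of $u_0 v_1, u_1 v_0 \in E_r$ together with two non-edges $u_0 v_0, u_1 v_1 \in \hat{E} \cup F$. If both lie in $\hat{E}$, the quadruple is exactly configuration $(A_1)$; if one lies in $\hat{E}$ and the other in $F$, it realises configuration $(C)$ (mapping the two red edges, the $\hat{E}$ non-edge, and the $F$-edge to the respective slots); both violate $(A, B, C)$-freeness. The remaining case, both $u_0 v_0, u_1 v_1 \in F$, is where I expect the main difficulty. My plan here is to pick a conflict partner $u'v' \in E_b$ of $u_0 v_1$, use $(A_2)$-freeness to force $u_0 v_0 \in E_b$ (ruling out $E_u$, since otherwise $u_0 v_0$ and $u'v'$ would be a conflicting pair of blue edges), rule out $u'v_0 \in E_r$ via $(B_1)$-freeness, and keep propagating edge-colour constraints---symmetrically using a conflict partner of $u_1 v_0$---until one of configurations $(A)$, $(B)$, or $(C)$ is forced. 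The reason this is hard is that the forbidden configurations only refer to edges in $E_r$, $E_b$, $\hat{E}$, and $F$, not $E_u$; so whenever a subcase of the chord or base-case analysis lands in $E_u$ or aligns two $F$-edges across the quadruple, committedness must be invoked to introduce auxiliary witness edges from $E_b$ before any of the five configurations can be recognised.
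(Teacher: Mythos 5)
Your base case $k=2$ matches the paper's first step: the $(\hat{E},\hat{E})$ and $(\hat{E},F)$ sub-cases are exactly configurations $(A_1)$ and $(C)$, and the $(F,F)$ sub-case is indeed resolved by taking conflict partners in $E_b$ of both red edges and propagating until a $(C)$ appears. The specific first deduction you describe is off, though: what the conflict partner $u'v'$ of $u_0v_1$ forces is not the colour of $u_0v_0$ but the existence of the edge $u_1v'$ (if $u_1v'$ were a non-edge, $u_1v_1\in F$ would be in conflict with $u'v'$, hence committed and blue, giving $(A_2)$); one then needs the two new edges $u_1v'$ and $u_0v''$ to be in conflict \emph{with each other} before their colours can be pinned down and a $(C)$ extracted.

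The genuine gap is the reduction from $k\ge 3$ to $k=2$. Your single-chord analysis of $u_1v_{k-1}$ closes only when the chord is red, is a non-edge of $G-F$, or is blue with $u_0v_0\in\hat{E}$. The sub-cases you defer are exactly where the work lies, and they cannot be dispatched by any argument confined to one chord and one quadruple: an uncommitted chord $u_1v_{k-1}\in E_u$ participates in no forbidden configuration, and when $u_0v_0\in F$ the quadruple fails to be a $(B_1)$ because the slot that must carry a non-edge of $G$ is occupied by an edge. The paper needs two further ideas here that your sketch does not anticipate. First, it proves that a minimal long cycle contains no $F$-edges at all, by assembling from the conflict partners of two consecutive red edges a length-8 alternating cycle of $E_b\cup E_u$ relative to $G$ and playing Lemma~\ref{lemma:sandwich} off against Lemma~\ref{lemma:G2}. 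Second, it then works with a window of three consecutive pairs $u_0,v_0,u_1,v_1,u_2,v_2$ and uses \emph{two} chords, $u_2v_0$ and $u_0v_2$, which are in conflict with each other; that mutual conflict is what rules out the $E_u$ possibility, forces $u_2v_0\in E_b$ and $u_0v_2\in E_r$, and produces a configuration $(B_1)$ on $u_1,v_0,u_2,v_1$. Without an analogue of these two steps your induction does not go through.
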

\begin{proof}
We first prove that 
there are no alternating cycles of $E_r$ with length 4 relative to $G - F$, 
that is, no two edges in $E_r$ are in conflict in $G - F$. 
Since the bipartition $(E_r, E_b)$ 
does not have a configuration $(A_1)$ or $(C)$, 
it is enough to show that 
$(E_r, E_b)$ has no configuration consisting of 
four vertices $u_1, v_1, u_2, v_2$ 
with edges $u_1v_1, u_2v_2 \in E_r$ and $u_1v_2, u_2v_1 \in F$. 
Suppose $(E_r, E_b)$ has such a configuration. 
Since $u_1v_1 \in E_r$, there is an edge 
$u_1'v_1' \in E_b$ such that $u_1v_1$ and $u_1'v_1'$ are in conflict in $G$, 
that is, $u_1v_1', u_1'v_1 \in \hat{E}$. 
We have $u_2v_1' \in E$, for otherwise 
$u_2v_1 \in F$ and $u_1'v_1' \in E_b$ would be in conflict in $G$ 
(recall that $F \subseteq E_b \cup E_u$). 
If $u_2v_1' \in E_r$, then the vertices $u_1, v_1, u_2, v_1'$ 
would induce a configuration $(C)$, a contradiction. 
Therefore, $u_2v_1' \in E_b \cup E_u$. 
Similarly, since $u_2v_2 \in E_r$, there is an edge 
$u_2'v_2' \in E_b$ such that $u_2v_2$ and $u_2'v_2'$ are in conflict in $G$, 
that is, $u_2v_2', u_2'v_2 \in \hat{E}$. 
We have $u_1v_2' \in E$, for otherwise 
$u_1v_2 \in F$ and $u_2'v_2' \in E_b$ would be in conflict in $G$. 
Since $u_1v_2'$ and $u_2v_1' \in E_b \cup E_u$ are in conflict in $G$, 
we have $u_1v_2' \in E_r$ and $u_2v_1' \in E_b$. 
However, the vertices $u_2, v_2, u_1, v_2'$ induce a configuration $(C)$, 
a contradiction. 
Thus, there are no alternating cycles of $E_r$ 
with length 4 relative to $G - F$. 

We now suppose that there are an alternating cycle of $E_r$ 
with length grater than 4 relative to $G - F$. 
Let $AC$ be such an alternating cycle with minimal length, and 
let $u_0, v_0, u_1, v_1, \ldots u_{k-1}, v_{k-1}$ be 
the consecutive vertices of $AC$ 
with $u_iv_i \in \hat{E} \cup F$ and $u_{i+1}v_i \in E_r$ 
for any $i$, $0 \leq i < k$ (indices are modulo $k$). 

We claim that $AC$ has no edges in $F$. 
Suppose $u_1v_1 \in F$. 
We have $u_2v_0 \in E$, for otherwise the vertices 
$u_2, v_1, u_1, v_0$ would induce a configuration $(C)$. 
If $u_2v_0 \in E_r$, then the vertices 
$u_0, v_0, u_2, v_2, \ldots u_{k-1}, v_{k-1}$ form 
a shorter alternating cycle of $E_r$ relative to $G - F$, 
contradicting the minimality of $AC$. 
Therefore, $u_2v_0 \in E_b \cup E_u$. 
Since $u_1v_0 \in E_r$, there is an edge 
$u_1'v_0' \in E_b$ such that 
$u_1v_0$ and $u_1'v_0'$ are in conflict in $G$, that is, 
$u_1'v_0, u_1v_0' \in \hat{E}$. 
Similarly, since $u_2v_1 \in E_r$, there is an edge 
$u_2'v_1' \in E_b$ such that 
$u_2v_1$ and $u_2'v_1'$ are in conflict in $G$, that is, 
$u_2'v_1, u_2v_1' \in \hat{E}$. 
The edges $u_1'v_0'$ and $u_2'v_1'$ are not the same edge, 
for otherwise $u_1'v_0' \in E_b$ and $u_2v_0 \in E_b \cup E_u$ 
would be in conflict in $G$. 
Then, the vertices $u_1, v_1, u_2', v_1', u_2, v_0, u_1', v_0'$ form 
an alternating cycle of $E_b \cup E_u$ relative to $G$ 
(recall that $F \subseteq E_b \cup E_u$). 
It follows from Lemma~\ref{lemma:sandwich} that 
$E_b \cup E_u$ does not induce a chain graph, 
contradicting Lemma~\ref{lemma:G2}. Thus, $AC$ has no edges in $F$. 

Recall that the length of $AC$ is at least 6, and 
let $u_0, v_0, u_1, v_1, u_2, v_2$ denote the consecutive vertices of $AC$. 
Since $AC$ has no edges in $F$, we have 
$u_0v_0, u_1v_1, u_2v_2 \in \hat{E}$ and $u_1v_0, u_2v_1\in E_r$. 
We have $u_2v_0 \in E$, for otherwise 
$u_1v_0 \in E_r$ and $u_2v_1 \in E_r$ would be in conflict in $G$. 
If $u_2v_0 \in E_r$, then the vertices 
$u_0, v_0, u_2, v_2, \ldots u_{k-1}, v_{k-1}$ form 
a shorter alternating cycle of $E_r$ relative to $G - F$, 
contradicting the minimality of $AC$. 
Therefore, $u_2v_0 \in E_b \cup E_u$. 
On the other hand, 
if $u_0v_1 \in \hat{E}$, then the vertices 
$u_0, v_1, u_2, v_2, \ldots u_{k-1}, v_{k-1}$ form 
a shorter alternating cycle of $E_r$ relative to $G - F$, 
contradicting the minimality of $AC$. Therefore, $u_0v_1 \in E$. 
Since $u_0v_1$ and $u_1v_0 \in E_r$ are in conflict in $G$, we have $u_0v_1 \in E_b$. 
By similar arguments, we have $u_1v_2 \in E_b$. 
Then, we have $u_0v_2 \in E$, for otherwise 
$u_0v_1 \in E_b$ and $u_1v_2 \in E_b$ would be in conflict in $G$. 
Since $u_0v_2$ and $u_2v_0 \in E_b \cup E_u$ are in conflict in $G$, 
we have $u_0v_2 \in E_r$ and $u_2v_0 \in E_b$. 
This implies that the vertices 
$u_1, v_0, u_2, v_1$ induce a configurations $(B_1)$, a contradiction. 

Thus, we conclude that 
there are no alternating cycles of $E_r$ relative to $G - F$. 
\qed
\end{proof}

Now, we have the following from Lemmas~\ref{lemma:sandwich} and~\ref{lemma:G1}. 
\begin{lemma}\label{lemma:G1-2}
There is a chain completion of $E_r$ in $G - F$, and 
it can be computed in linear time from $E_r$. 
\end{lemma}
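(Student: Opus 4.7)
The plan is to obtain Lemma~\ref{lemma:G1-2} as an immediate corollary of the two preceding lemmas. First I would view $E_r$ as a distinguished edge set inside the bipartite graph $G - F$, which is well defined because all committed edges of $E_r$ belong to $E \setminus F$ (recall $F \subseteq E_b \cup E_u$). Lemma~\ref{lemma:G1} then tells us that there are no alternating cycles of $E_r$ relative to $G - F$. Applying the first bullet of Lemma~\ref{lemma:sandwich} to the pair $(G - F, E_r)$ yields the existence of a chain completion of $E_r$ in $G - F$.

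For the running-time claim, I would invoke the second bullet of Lemma~\ref{lemma:sandwich}, which produces a chain completion in $O(n+m)$ time on a bipartite host graph. Since $G - F$ is bipartite with at most $|U|+|V|$ vertices and at most $m$ edges, this runs in linear time in the size of $G$ once $E_r$ is given.

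There is no real obstacle: all substantive work has already been done in Lemma~\ref{lemma:G1}, where the forbidden-configuration analysis ruled out alternating cycles of every length. So I would keep the proof to essentially two sentences, one citing Lemma~\ref{lemma:G1} to check the hypothesis and one citing Lemma~\ref{lemma:sandwich} to deliver both the existence and the linear-time computation of the chain completion.
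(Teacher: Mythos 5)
Your proposal is correct and matches the paper exactly: the paper derives Lemma~\ref{lemma:G1-2} as an immediate consequence of Lemma~\ref{lemma:G1} (no alternating cycles of $E_r$ relative to $G-F$) together with Lemma~\ref{lemma:sandwich} (existence and linear-time computation of a chain completion in the absence of alternating cycles). No further argument is needed.
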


Since every edge of $G$ belongs to either $E_r$ or $E_b \cup E_u$, 
$G$ can be covered by the chain completion of $E_r$ in $G - F$ and 
the chain subgraph of $G$ induced by $E_b \cup E_u$. 
Thus, we have the following from Lemmas~\ref{lemma:G2} and~\ref{lemma:G1-2}. 
\begin{lemma}
From a given $(A, B, C)$-free bipartition of $E_c$, 
a 2-chain subgraph cover $(G_1, G_2)$ of $G$ 
such that $G_1$ has no edges in $F$ 
can be computed in linear time. 
\end{lemma}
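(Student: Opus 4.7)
The plan is to take $G_2$ to be the spanning subgraph of $G$ with edge set $E_b \cup E_u$ and $G_1$ to be a chain completion of $E_r$ inside $G - F$, and then to verify the three required properties: that both are chain subgraphs of $G$, that $E(G_1) \cup E(G_2) = E$, and that $G_1$ avoids $F$.

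That $G_2$ is a chain subgraph of $G$ is precisely Lemma~\ref{lemma:G2}. For $G_1$, I would apply Lemma~\ref{lemma:G1-2} to the set $E_r$ and the bipartite graph $G - F$, obtaining in linear time a chain subgraph of $G - F$ that contains every edge of $E_r$; because this subgraph lives inside $G - F$, it automatically has no edges in $F$. Only one small subtlety is worth stating: computing $G_1$ inside $G - F$ rather than inside $G$ is essential, since otherwise the chain completion might drag edges of $F$ into $G_1$.

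For the covering property, I would invoke the standing convention from Section~\ref{sect:bipartition} that all committed edges of $F$ have been placed in $E_b$; since every uncommitted edge is in $E_u$ by definition, this gives $F \subseteq E_b \cup E_u = E(G_2)$, so $G_2$ absorbs $F$ automatically. Combined with the partition $E = E_r \cup E_b \cup E_u$, every edge of $G$ lies either in $E_r \subseteq E(G_1)$ or in $E_b \cup E_u = E(G_2)$, so $(G_1, G_2)$ is indeed a 2-chain subgraph cover of $G$.

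Finally, the linear-time bound is immediate: $G_2$ is constructed by a single pass over $E$ that discards the edges of $E_r$, and $G_1$ is produced in linear time directly by Lemma~\ref{lemma:G1-2}. There is no genuine obstacle at this stage, since the hard structural work was already done in Lemmas~\ref{lemma:G2} and~\ref{lemma:G1}; the present lemma is just an assembly step combining them with the chain graph sandwich algorithm of Lemma~\ref{lemma:sandwich}.
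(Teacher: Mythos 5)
Your proposal is correct and follows essentially the same route as the paper: $G_1$ is the chain completion of $E_r$ in $G-F$ supplied by Lemma~\ref{lemma:G1-2}, $G_2$ is the subgraph induced by $E_b\cup E_u$ from Lemma~\ref{lemma:G2}, and the cover property follows from $E=E_r\cup E_b\cup E_u$ together with the convention $F\subseteq E_b\cup E_u$. The paper treats this lemma as exactly the assembly step you describe, so nothing further is needed.
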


\section{Concluding Remarks}
This paper provides an $O(m\hat{m} + \min\{m^2, \hat{m}(\hat{m}+f)\})$-time algorithm 
to solve the restricted 2-chain subgraph cover problem by reducing it to 2SAT. 
To do this, we show that the problem has a feasible solution if and only if 
there is an $(A, C)$-free bipartition of 
the set of committed edges of the given bipartite graph. 
This result implies a simpler recognition algorithm 
for simple-triangle graphs. 

We finally note that for simple-triangle graphs, 
structure characterizations as well as 
the complexity of the graph isomorphism problem 
still remain open questions. 

%\subsubsection*{Acknowledgments}
\subsection*{Acknowledgments}
We are grateful to anonymous referees 
for careful reading and helpful comments. 
A part of 
this work was done while the author was in Tokyo Institute of Technology and 
supported by JSPS Grant-in-Aid for JSPS Fellows (26$\cdot$8924). 
The final publication is available at Springer via~\url{http://dx.doi.org/10.1007/978-3-319-53925-6_14}. 

%\bibliographystyle{splncs03}
%\bibliography{ref}

%\newpage
\appendix

\section{Proof of Lemma~\ref{lemma:sandwich}}
\begin{proof}
We first prove the "only-if" part. 
A vertex of a bipartite graph is called an \emph{isolated vertex} 
if it is not adjacent to any vertex, 
and a vertex is called a \emph{dominating vertex} 
if it is adjacent to all the vertices 
on the other side of the bipartition. 
It is known that a chain graph has an isolated vertex or 
a dominating vertex~\cite{MP95-book}. 
Suppose $M$ has a chain completion $C$ in $H$, and 
there is an alternating cycle of $M$ relative to $H$. 
Since $C$ has all the edges in $M$, 
it also has all the vertices on the alternating cycle. 
Let $C'$ be the subgraph of $C$ induced by 
the vertices on the alternating cycle. 
Since $C$ is a chain graph, $C'$ is also a chain graph. 
However, $C'$ has neither isolated vertex nor dominating vertex 
since any vertex of $C'$ is incident to an edge in $M$ 
and incident to a non-edge in $\hat{E}$. 
It follows that $C'$ is not a chain graph, a contradiction. 

We next prove the "if" part by induction. We assume that 
the lemma holds for any bipartite graph with fewer vertices than $H$. 
Suppose there are no alternating cycles of $M$ relative to $H$. 
Then, $H$ has a vertex incident to no edges in $M$ or 
a vertex incident to no edges in $\hat{E}$, 
for otherwise we can grow a path alternating between $M$ and $\hat{E}$ 
until an alternating cycle is obtained. 
Let $u$ be such a vertex, and 
we assume without loss of generality that $u \in U$. 
Let $M - u$ be the set of edges in $M$ not incident to $u$, and 
let $H - u$ be the subgraph of $H$ obtained by removing $u$. 
Since there are no alternating cycles of $M$ relative to $H$, 
there are no alternating cycles of $M - u$ relative to $H - u$. 
It follows by induction that 
there is a chain completion $C = (U, V, E)$ of $M - u$ in $H - u$. 
If $u$ is incident to no edges in $M$, 
then $C' = (U \cup \{u\}, V, E)$ 
is a chain completion of $M$ in $H$. 
If $u$ is incident to no edges in $\hat{E}$, 
then $C' = (U \cup \{u\}, V, E \cup \{uv \mid v \in V \})$ 
is a chain completion of $M$ in $H$. 

The proof of "if" part gives a linear-time algorithm that 
finds a chain completion of $M$ in $G$. 
The details of the algorithm are also shown in~\cite{DFGKM11-AOR}. 
\qed
\end{proof}

\end{document}